\documentclass[]{article}


 	\usepackage[latin1]{inputenc}
 	\usepackage[T1]{fontenc}
	
	\usepackage{amsfonts}
 	\usepackage{amsmath}
 	\usepackage{amssymb}
  	\usepackage{amsthm}
  	\usepackage{color}
 	\usepackage[english]{babel}
	\usepackage{enumerate}
 	\usepackage{latexsym} 	
  	\usepackage[margin= 2.75cm]{geometry}
 	\usepackage{setspace}
 	\usepackage{url}

 	
 	\onehalfspace
	\parindent=0pt
	\parskip=5pt
	
	



	\newcommand{\dsum}[3]{\displaystyle\sum_{{#1}}^{{#2}}{#3}}
	\newcommand{\set}[3]{\left\{{#3}\right\}_{{#1}}^{{#2}}}
	
	\newcommand{\corr}[2]{Corr\left({#1},{#2}\right)}
	\newcommand{\ccorr}[3]{Corr\left({#1}, {#2}\left|{#3}\right.\right)}
	\newcommand{\ecorr}[2]{Corr_{e}\left({#1},{#2}\right)}
	\newcommand{\dev}[2]{\delta\left({#1},{#2}\right)}
	\newcommand{\prob}[1]{p\left({#1}\right)}
	\newcommand{\altprob}[1]{p'\left({#1}\right)}
	\newcommand{\cprob}[2]{p\left({#1}\left|{#2}\right.\right)}

	\newtheorem{definition}{Definition}
	\newtheorem{lemma}{Lemma}
	\newtheorem{proposition}{Proposition}


\title{Generalised Reichenbachian Common Cause Systems}

\author{Claudio Mazzola}
\date{}



\begin{document}

\maketitle

\vspace{-32pt}
\begin{center}
\normalsize{School of Historical and Philosophical Inquiry \\ The University of Queensland \\ Forgan Smith Building (1), St. Lucia, QLD 4072, Australia \\ c.mazzola@uq.edu.au}
\end{center}


\section{Introduction}


Chancy coincidences happen everyday, but sometimes coincidences are just too striking, or too improbable, not to reveal the presence of some coordinating process. To wit, if all the electrical appliances in a building were to shut down at exactly the same time, it would not be unreasonable to search for a breakdown in their common power supply. Similarly, if the price of petrol were to simultaneously rise in all oil importing countries, it would be a fair bet that exporters had concertedly decided to reduce extraction. The principle of the common cause is the inferential rule governing instances of this kind: informally stated, it asserts that improbable coincidences are to be put down to the action of a common cause. 

Reichenbach \cite{reichenbach 1956} was the first to provide the principle of the common cause with a matematical characterisation. His treatment relied on three major ingredients: first, he represented improbable coincidences as positive probabilistic correlations between random events; second, he demanded that common causes should increase the probability of their effects; and third, he further required that conditioning on the presence, or on the absence, of a common cause should make its effects probabilistically independent from one another. 

Reichnenbach's treatment, however, was overly restrictive, as it rested on a too narrow conception of improbable coincidences, and on a correspondingly narrow understanding on the explanatory function of common causes. In \cite{mazzola 2013}, I accordingly proposed an improved interpretation of the principle, along with a suitably revided probabilistic model for common causes, which generalises Reichenbach's original formulation in two respects. On the one hand, it represents improbable coincidences not as positive correlations, but rather as positive differences between the correlation actually exhibited by a speficied pair of events, and the correlation that they should exhibit according to historical data, background beliefs, or established theory. On the other hand, and correspondingly, it demands that conditioning on the presence or on the absence of a common causes should restore the expected correlation between its effects. 

Reichenbach's understanding of the principle is demonstrably a special case of this interpretation, applying when the expected correlation between the events of interest is null. Nevertheless, there is one respect in which the probabilistic model proposed in \cite{mazzola 2013} is still not general enough. Like Reichenbach's original account, in fact, it depicts the action of a single common cause, and it is accordingly inadequate to capture instances whereby two coordinated effects are brought about by a system of distinct common causes. The aim of this paper is precisely to further expand the model in this direction. To this end, two avenues for the generalisation of the model will be explored, each based on a different probabilistic characterisation for systems of common causes. 

The article will be structured in three main sections. Firstly, in \S \ref{section: generalised common causes} the extended interpretation of the principle elaborated in \cite{mazzola 2013} will be briefly outlined, and given formal treatment. Next, in \S \ref{generalised hr-reichenbachian common cause systems} said interpretation will be incorporated into Hofer-Szab\'{o} and R\'{e}dei's Reichenbachian Common Cause Systems model \cite{hofer-szabo redei 2004}. Finally, in \S \ref{generalised m-reichenbachian common cause systems} the extended version of the principle will be integrated with my own revisitation of Reichenbachian Common Cause Systems \cite{mazzola 2012}. 


\section{Generalised conjunctive common causes}\label{section: generalised common causes}


Reichenbach originally applied the principle of the common cause to pairs of positively correlated, albeit causally unrelated, events. Before introducing his probabilistic model for common causes, a definition of probabilistic correlation is thus needed: 
\begin{definition}\label{definition: correlation}
Let $(\Omega, p)$ be a classical probability space with $\sigma$-algebra of random events $\Omega$ and probability measure $p$. For any $A,B,C\in \Omega$ such that $\prob{C}\neq 0$, we define:
\begin{equation}\label{equation: correlation}
\ccorr{A}{B}{C}:= \cprob{A\wedge B}{C} - \cprob{A}{C}\cprob{B}{C}.
\end{equation}
Moreover, 
\begin{equation}\label{equation: unconditional correlation}
\corr{A}{B} : = \ccorr{A}{B}{\cup_{X_{i}\in \Omega} X_{i}}.
\end{equation}
\end{definition} 
The expression $\ccorr{A}{B}{C}$ denotes the \emph{correlation} of events $A$ and $B$ \emph{conditional} on event $C$. The expression $\corr{A}{B}$, instead, denotes the \emph{absolute correlation} or \emph{unconditional correlation} of events $A$ and $B$. Two events are said to be \emph{positively} (\emph{negatively}) \emph {correlated} (conditional on another event) if their correlation (conditional on said event) is greater (smaller) than zero; by the same token, they are said to be \emph{uncorrelated} or \emph{probabilistically independent} (conditional on another event) if their correlation (conditional on that event) is equal to zero.

The existence of a positive correlation between two events is often an indication that one of them is a cause of the other. However, this is not invariably the case: as is well known, correlation does not imply causation. Reichenbach's interpretation of the common cause principle could indeed be seen as an attempt to preserve a one-one correspondence between probabilistic correlation and causal dependence \cite{hofer-szabo et al 2013}: in his account, acquiring information about the occurrence of a common cause should dissolve, as it were, any positive correlation between causally unrelated events. 
Reichenbach gave formal shape to this intuition by demanding that conditioning on the presence of a common cause, or on its absence, should make its effects probabilistically independent. The result was a probabilistic model for common causes known as \emph{conjunctive fork}. With only a slight terminological modification and few minor notational variants, we can introduce his model as follows:
\begin{definition}\label{definition: conjunctive fork}
Let $(\Omega, p)$ be a classical probability space with $\sigma$-algebra of random events $\Omega$ and probability measure $p$. For any three distinct $A, B, C\in \Omega$, the event $C$ is a conjunctive common cause for $\corr{A}{B}$ if and only if:
\begin{gather}
\prob{C} \neq 0  									\label{equation: c-fork 0} \\ 
\prob{\overline{C}} \neq 0							\label{equation: c-fork 0 complement}\\
\ccorr{A}{B}{C}  =  0					 		  	\label{equation: c-fork 1} \\
\ccorr{A}{B}{\overline{C}}  =  0							\label{equation: c-fork 2} \\
\cprob{A}{C} - \cprob{A}{\overline{C}}  >  0					\label{equation: c-fork 3} \\
\cprob{B}{C} - \cprob{B}{\overline{C}}  >  0.					\label{equation: c-fork 4} 
\end{gather}
\end{definition}

Conjunctive common causes, as just defined, are intended to explain the occurrence of non-causal positive correlations in two ways. On the one hand they increase the joint probability of their effects, consequently favouring their correlation, as established by the following proposition:
\begin{proposition}
Let $(\Omega, p)$ be a classical probability space with $\sigma$-algebra of random events $\Omega$ and probability measure $p$. For any three distinct $A, B, C\in \Omega$, if $C$ is a conjunctive common cause for $\corr{A}{B}$, then:
\begin{equation}\label{equation: positive correlation}
\corr{A}{B} > 0.
\end{equation}
\end{proposition}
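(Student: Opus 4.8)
The plan is to rewrite the unconditional correlation $\corr{A}{B}$ entirely in terms of the conditional probabilities $\cprob{A}{C}$, $\cprob{A}{\overline{C}}$, $\cprob{B}{C}$, $\cprob{B}{\overline{C}}$ and the weight $\prob{C}$, and then to exhibit the resulting expression as manifestly positive. The central tool is the law of total probability relative to the partition $\{C, \overline{C}\}$, whose use is legitimate precisely because conditions \eqref{equation: c-fork 0} and \eqref{equation: c-fork 0 complement} guarantee that both cells have nonzero measure, so that every conditional probability below is well defined.

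First I would record the three total-probability expansions
\begin{gather*}
\prob{A\wedge B} = \prob{C}\cprob{A\wedge B}{C} + \prob{\overline{C}}\cprob{A\wedge B}{\overline{C}}, \\
\prob{A} = \prob{C}\cprob{A}{C} + \prob{\overline{C}}\cprob{A}{\overline{C}}, \\
\prob{B} = \prob{C}\cprob{B}{C} + \prob{\overline{C}}\cprob{B}{\overline{C}}.
\end{gather*}
The screening-off conditions \eqref{equation: c-fork 1} and \eqref{equation: c-fork 2} state exactly that $\cprob{A\wedge B}{C} = \cprob{A}{C}\cprob{B}{C}$ and $\cprob{A\wedge B}{\overline{C}} = \cprob{A}{\overline{C}}\cprob{B}{\overline{C}}$, so the first line becomes a weighted average of two products. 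Substituting these into $\corr{A}{B} = \prob{A\wedge B} - \prob{A}\prob{B}$ and expanding the product of the two averages is then a routine calculation.

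The crux is the algebraic simplification of that expansion. Writing $\prob{\overline{C}} = 1 - \prob{C}$ and regrouping, I expect the terms to combine so that the entire expression factors as
\begin{equation*}
\corr{A}{B} = \prob{C}\prob{\overline{C}}\bigl(\cprob{A}{C} - \cprob{A}{\overline{C}}\bigr)\bigl(\cprob{B}{C} - \cprob{B}{\overline{C}}\bigr).
\end{equation*}
Once this identity is in hand the conclusion is immediate: the factor $\prob{C}\prob{\overline{C}}$ is strictly positive by \eqref{equation: c-fork 0} and \eqref{equation: c-fork 0 complement}, while the two remaining factors are strictly positive by \eqref{equation: c-fork 3} and \eqref{equation: c-fork 4} respectively, whence $\corr{A}{B} > 0$. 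The only genuine obstacle is recognising the factorisation after the expansion; everything preceding it is bookkeeping, and nothing in the argument invokes the conditions beyond what has just been cited.
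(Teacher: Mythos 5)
Your proposal is correct, and it is essentially the paper's own approach: the paper states this proposition without explicit proof, but its Lemma \ref{lemma: grccs deviation} establishes precisely your decomposition in general form, and specialising that identity to the two-cell partition $\{C,\overline{C}\}$ with $\ecorr{A}{B}=0$ yields exactly your factorisation $\corr{A}{B} = \prob{C}\prob{\overline{C}}\bigl[\cprob{A}{C}-\cprob{A}{\overline{C}}\bigr]\bigl[\cprob{B}{C}-\cprob{B}{\overline{C}}\bigr]$. Your expansion does go through as claimed (writing $\prob{\overline{C}} = 1-\prob{C}$ and regrouping gives the stated product), and the positivity conclusion then follows from the cited conditions just as you say.
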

On the other hand, conditions (\ref{equation: c-fork 1})-(\ref{equation: c-fork 2}) demand that the correlation between the effects of a conjunctive common cause should disappear when conditioning on the occurrence, or on the absence, of said cause: in jargon, we say that the common cause \emph{screens-off} the two effects from one another. This is meant to indicate that the positive correlation between the two effects is purely epiphenomenal, being a mere by-product of the underlying action of the common cause.

Reichebach's conjunctive common cause model has exerted considerable influence in both probabilistic causal modelling and the philosophy of science. To mention but few of its contributions to the latter field, it anticipated the probabilistic causality program \cite{good 1961,suppes 1970,cartwright 1979,skyrms 1980,eells 1991}, fostered the development of probabilistic accounts of scientific explanation \cite{salmon 1971,suppes zaniotti 1981}, and inspired causal interpretations of Bell's no-go theorem in quantum physics \cite{vanfraassen 1991}. Simultaneously, the Bayesian Networks movement in probabilistic causal modelling incorporated and generalised the screening-off constraints (\ref{equation: c-fork 1})-(\ref{equation: c-fork 2}) in the guise of the so-called Causal Markov Condition, according to which any two variables that are not related as cause and effect must be probabilisitically independent conditional on the set of their direct causes \cite{pearl 1988,pearl 2000, spirtes et al 2001}. Nonetheless, the conjunctive common cause model relies on a demonstrably restrictive understanding of the principle of the common cause, and on a correspondingly narrow conception of the explanatory function performed by common causes. 

To fully appreciate this, it will be instructive to start by taking a deeper look at the very thing the principle of the common cause is intended to apply to: improbable coincidences. Reichenbach, as we saw, understood improbable coincidences as positive correlations between causally unrelated events. Positively correlated events tend to be coinstantiated, so it is clear why positive correlations can be used to give coincidences a probabilistic representation. The problem is: in what sense, then, can coincidences between causally unrelated events be deemed \emph{improbable}? The underlying presupposition is that \emph{in general} causally unrelated events tend to be uncorrelated, so \emph{in general} positive correlations between such events are not to be expected. Reichenbach, in other words, applied the principle of the common cause to pairs of events that happen to be positively correlated, even though we would expect them to be not. To be even more explicit: he applied the principle to cases where the observed value of the correlation between two events is strictly higher than its expected value, \emph{which is zero}.  

Once the principle is presented in this way, however, it becomes apparent that there is no reason not to demand that it should equally apply to \emph{all} cases in which two events are more strongly correlated than expected, whatever the value of their expected correlation. The \emph{extended principle of the common cause} is specifically tailored to meet this demand. Compressed in one sentence, it claims that the role of common causes is to explain statistically significant deviations between the estimated value of a correlation and its expected value, by conditionally restoring the latter.

To illustrate, let us consider an economic example. Let us imagine that an econometric analysis revealed a strong positive correlation between holding a postgraduate degree and earning a higher-than-average income. This positive correlation, in and of itself, would not be surprising, as it would be consistent with both common sense and microeconomic theory: people who study more are likely to earn higher wages, owing to the comparatively scarce supply and higher productivity of skilled labour. But suppose that, in the case at hand, the estimated correlation were remarkably strong: strong enough to be significantly dissimilar from the average correlation reported by other similar studies. Then, excluding any mistakes in the analysis, it would be natural for one to wonder if there were anything about the selected sample, which could bring about said discrepancy. 

The extended principle of common cause urges that the explanation should be sought in the presence of some unacknowledged common cause. To wit, we may imagine that the econometric analysis in our example were conducted in relatively wealthy subpopulation. People coming from wealthy families are more likely to undergo additional years of study, since they can more easily afford the opportunity costs this involves; moreover, they are more likely to earn their degrees from renouned but expensive academic institutions, whose graduates have a higher chance to be hired in high-earning appointments. By simultaneously increasing the probability of holding a postgraduate degree and the probability of earning a higher-than-average income, family wealth would consequently increase their joint probability, and explain their stronger-than-usual correlation. 
 
Remarkably, in this case it would be unreasonable to require that the correlation between holding a postgraduate degree and of earning a higher-than-average income should disappear conditional on family wealth: after all, as we already noticed, some positive correlation between wage and qualification is to be expected. Rather, conditioning on the common cause should restore the expected correlation between the two events, consequently eliminating the apparent disagreement between the econometric analysis and the preceding studies. 

To provide the extended principle of the common cause with some formal bite, let us first define:
\begin{definition}\label{definition: deviation}
Let $(\Omega, p)$ be a classical probability space with $\sigma$-algebra of random events $\Omega$ and probability measure $p$. For any $A,B\in \Omega$, the \emph{deviation} of $\corr{A}{B}$ is the quantity
\begin{equation}\label{equation: deviation}
\dev{A}{B} := \corr{A}{B} - \ecorr{A}{B},
\end{equation}
where $\ecorr{A}{B}$ denotes the \emph{expected correlation} between $A$ and $B$.
\end{definition} 
Notice that the notions of \emph{deviation} and \emph{expectation}, as they are understood here, are not necessarily restricted to the corresponding statistical concepts: in particular, the expected correlation between two values may be determined by non-statistical means, e.g. on the basis of logical or mathematical rules, or simply on the basis of entrenched prior beliefs.  
On this basis, we can now define:
\begin{definition}\label{definition: generalized conjunctive fork}
Let $(\Omega, p)$ be a classical probability space with $\sigma$-algebra of random events $\Omega$ and probability measure $p$. For any three distinct $A, B, C\in \Omega$, the event $C$ is a \emph{generalised common cause} for $\dev{A}{B}$ if and only if:
\begin{gather}
\prob{C} \neq 0      							\label{equation: gc-fork 0} \\ 	
\prob{\overline{C}} \neq 0					\label{equation: c-fork 0 complement}\\
\ccorr{A}{B}{C} = \ecorr{A}{B}      				\label{equation: gc-fork 1} \\         
\ccorr{A}{B}{\overline {C}} = \ecorr{A}{B}	  		\label{equation: gc-fork 2} \\         
\cprob{A}{C} - \cprob{A}{\overline{C}} > 0      		\label{equation: gc-fork 3} \\         
\cprob{B}{C} - \cprob{B}{\overline{C}} > 0.     		\label{equation: gc-fork 4}   
\end{gather}
\end{definition}

Just like conjunctive common causes do for positive correlations, generalised common causes explain positive deviations in two ways. On the one hand, (\ref{equation: gc-fork 1})-(\ref{equation: gc-fork 2}) demand that conditioning on the presence of a common cause, or on its absence, should restore the expected correlation between its effects. On the other hand, 
generalised common causes increase the unconditional correlation between their effects, consequently generating the observed discrepancy between the estimated value of said correlation and its expected value:
\begin{proposition}
Let $(\Omega, p)$ be a classical probability space with $\sigma$-algebra of random events $\Omega$ and probability measure $p$. For any three distinct $A, B, C\in \Omega$, if $C$ is a generalised common cause for $\dev{A}{B}$ then
\begin{gather}\label{equation: positive deviation}
\dev{A}{B} > 0.
\end{gather}
\end{proposition}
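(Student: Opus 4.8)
The plan is to compute $\dev{A}{B}$ explicitly by re-expressing the unconditional correlation in terms of quantities conditional on $C$ and on $\overline{C}$, and then to show that the deviation factors into a product of manifestly positive terms. The conceptual point is that, because the expected correlation enters identically in both conditioning branches, it will cancel out of the difference and leave behind exactly the same product structure that governs the classical conjunctive-fork case.

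First I would invoke the law of total probability, conditioning on $C$ and $\overline{C}$, to write $\prob{A\wedge B} = \cprob{A\wedge B}{C}\prob{C} + \cprob{A\wedge B}{\overline{C}}\prob{\overline{C}}$, together with the analogous expansions of $\prob{A}$ and $\prob{B}$. The screening-off conditions (\ref{equation: gc-fork 1}) and (\ref{equation: gc-fork 2}) then let me replace the conditional joint probabilities: from the definition of conditional correlation, $\cprob{A\wedge B}{C} = \ecorr{A}{B} + \cprob{A}{C}\cprob{B}{C}$, and likewise for $\overline{C}$. Substituting these into the expansion of $\prob{A\wedge B}$ and using $\prob{C}+\prob{\overline{C}}=1$, the two expected-correlation contributions aggregate into a single $\ecorr{A}{B}$ term.

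Next I would form $\corr{A}{B} = \prob{A\wedge B} - \prob{A}\prob{B}$ and subtract $\ecorr{A}{B}$ to obtain the deviation. Abbreviating $a_{1}=\cprob{A}{C}$, $a_{2}=\cprob{A}{\overline{C}}$, $b_{1}=\cprob{B}{C}$, $b_{2}=\cprob{B}{\overline{C}}$ and $c=\prob{C}$, the expected-correlation term cancels exactly and what remains is the purely algebraic expression $c\,a_{1}b_{1} + (1-c)\,a_{2}b_{2} - (c a_{1}+(1-c)a_{2})(c b_{1}+(1-c)b_{2})$. The key step is to verify that this collapses to the single product
\begin{equation}
\dev{A}{B} = \prob{C}\,\prob{\overline{C}}\,\bigl(\cprob{A}{C}-\cprob{A}{\overline{C}}\bigr)\bigl(\cprob{B}{C}-\cprob{B}{\overline{C}}\bigr).
\end{equation}

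Once this identity is in hand, positivity is immediate: $\prob{C}>0$ and $\prob{\overline{C}}>0$ by (\ref{equation: gc-fork 0}) and its complement, while the two remaining factors are positive by (\ref{equation: gc-fork 3}) and (\ref{equation: gc-fork 4}); hence $\dev{A}{B}>0$. I expect the only genuine work to lie in the bookkeeping of the algebraic collapse displayed above — expanding the product $(c a_{1}+(1-c)a_{2})(c b_{1}+(1-c)b_{2})$, grouping the $c(1-c)$ coefficients, and recognising the difference-of-products pattern $a_{1}b_{1}+a_{2}b_{2}-a_{1}b_{2}-a_{2}b_{1}=(a_{1}-a_{2})(b_{1}-b_{2})$ — rather than in any conceptual obstacle, since the screening-off hypotheses do all the heavy lifting by forcing $\ecorr{A}{B}$ to drop out symmetrically.
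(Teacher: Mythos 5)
Your proof is correct and takes essentially the paper's route: the paper states this proposition without a separate proof, but your key identity $\dev{A}{B}=\prob{C}\,\prob{\overline{C}}\,\bigl[\cprob{A}{C}-\cprob{A}{\overline{C}}\bigr]\bigl[\cprob{B}{C}-\cprob{B}{\overline{C}}\bigr]$ is precisely the $n=2$ instance of identity (\ref{equation: grccs deviation}) in Lemma \ref{lemma: grccs deviation}, applied to the partition $\{C,\overline{C}\}$ and derived there by the same combination of the theorem of total probability with the screening-off conditions, under which $\ecorr{A}{B}$ cancels exactly as you describe. Concluding positivity from (\ref{equation: gc-fork 0})--(\ref{equation: gc-fork 4}) then matches the argument the paper leaves implicit, so nothing is missing.
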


Quite evidently, conjunctive common causes can be thought of  generalised common causes whose effects are expected to be uncorrelated. Nonetheless, the generalised common cause model is demonstrably immune from some of the most common objections to the standard interpretation of the common cause principle. 

For one thing, it has been objected that the screening-off conditions (\ref{equation: c-fork 1})-(\ref{equation: c-fork 2}) are too restrictive, either because they are only satisfied by deterministic common causes \cite{vanfraassen 1980,cartwright 1999}, or because they exclusively apply when the effects of a common cause are independently produced \cite{salmon 1984,cartwright 1988}. This objection is easily met by the generalised common cause model, which drops (\ref{equation: c-fork 1})-(\ref{equation: c-fork 2}) in favour of the more general constraints (\ref{equation: gc-fork 1})-(\ref{equation: gc-fork 2}). For another thing, it has been contended that non-causal positive correlations that result from logical, mathematical, semantic, or  nomic relations do not generally admit of a conjunctive common cause \cite{arntzenius 1992,williamson 2005}. The existence of similar correlations is clearly detrimental to the common undertanding of the principle of the common cause, but it is perfectly consistent with its extended version. The reason is that, according to the extended principle, similar correlations simply do not \emph{call} for a common cause explanation:  by hypothesis, they are determined by logical, mathematical, semantic, or physical laws, so they must be expected, and as such they fall outside its proper domain of application. They are, accordingly, no counterexample to it. 

The following sections will be dedicated to further enrich the generalised common cause model, so as to cover systems of multiple common causes. 


\section{Generalised HR-Reichenbachian Common Cause Systems}\label{generalised hr-reichenbachian common cause systems}


The first attempt to extend the conjunctive common cause model to comprise systems of multiple common causes was made by Hofer-Szab\'{o} and R\'{e}dei in \cite{hofer-szabo redei 2004}, who proposed, to this end, the following definition:
\begin{definition}\label{definition: rccs}
Let $(\Omega, p)$ be a classical probability space with $\sigma$-algebra of random events $\Omega$ and probability measure $p$. For any $A,B\in \Omega$, a \emph{Reichenbachian Common Cause System (HR-RCCS)} of size $n\geq 2$ for $\corr{A}{B}$ is a partition $\set{i = 1}{n}{C_i}$ of $\Omega$ such that:
\begin{eqnarray}
\prob{C_i} \neq 0								 				&	&		(i = 1, ..., n)					\label{equation: rccs 0} \\	
\ccorr{A}{B}{C_{i}} = 0							        			 	&	& 		(i = 1, ..., n)					\label{equation: rccs 1} \\
\big[\cprob{A}{C_i} - \cprob{A}{C_j}\big]\big[ \cprob{B}{C_i} - \cprob{B}{C_j}\big ]> 0     	&	&		(1, ..., n = i\neq j = 1, ..., n).		\label{equation: rccs 2} 
\end{eqnarray}
\end{definition}
Hofer-Szab\'{o} and R\'{e}dei refer to Reichenbachian Common Cause Systems using the acronym RCCS. The acronym HR-RCCS is here employed to distinguish their model from the one utilized in the next section. 

The notion of a HR-RCCS is meant to generalise the notion of a conjunctive common cause in two respects. On the one hand, Hofer-Szab\'{o} and R\'{e}dei demonstrate that only positively correlated pairs admit of a HR-RCCS. On the other hand, conditions (\ref{equation: rccs 0}), (\ref{equation: rccs 1}) and (\ref{equation: rccs 2}) are intended to generalise, respectively, conditions (\ref{equation: c-fork 0})-(\ref{equation: c-fork 0 complement}), (\ref{equation: c-fork 1})--(\ref{equation: c-fork 2}), and (\ref{equation: c-fork 3})--(\ref{equation: c-fork 4}) from Definition \ref{definition: conjunctive fork}. Specifically, (\ref{equation: rccs 1}) demands that each element of a HR-RCCS screen-off its common effects from one another. This means that HR-RCCSs increase the correlation between otherwise uncorrelated pairs, emulating as a consequence the explanatory function of conjunctive common causes.  

Modifying the above definition in accordance with the extended interpretation of common cause principle only requires replacing the screening-off condition (\ref{equation: rccs 1}) with a suitably generalised variant of (\ref{equation: gc-fork 1})-(\ref{equation: gc-fork 2}). Let us accordingly define:
\begin{definition}\label{definition: ghrrccs} 
Let $(\Omega, p)$ be a classical probability space with $\sigma$-algebra of random events $\Omega$ and probability measure $p$. For any $A,B\in \Omega$, a \emph{Generalised Reichenbachian Common Cause System (GHR-RCCS)} of size $n\geq 2$ for $\dev{A}{B}$ is a partition $\set{i = 1}{n}{C_i}$ of $\Omega$ such that:
\begin{eqnarray}
\prob{C_i} \neq 0								 				&	&		(i = 1, ..., n)					\label{equation: grccs 0} \\	
\ccorr{A}{B}{C_{i}} = \ecorr{A}{B}							      	   	&	& 		(i = 1, ..., n)					\label{equation: grccs 1} \\
\big[\cprob{A}{C_i} - \cprob{A}{C_j}\big]\big[ \cprob{B}{C_i} - \cprob{B}{C_j}\big ]> 0     	&	&		(1, ..., n = i\neq j = 1, ..., n).		\label{equation: grccs 2} 
\end{eqnarray}
\end{definition}

This definition generalises at once Definition \ref{definition: generalized conjunctive fork} and Definition \ref{definition: rccs}: it extends the former by admitting systems of \emph{any number} of common causes; it extends the latter by requiring that every common cause in a system should restore the expected correlation between its two effects \emph{whatever its value}. 

Not surprisingly, every GHR-RCCS  increases the correlation between its effects, consequently emulating the explanatory function of generalised common causes. 
\begin{proposition}\label{proposition: grccs positive deviation}
Let $(\Omega, p)$ be a classical probability space with $\sigma$-algebra of random events $\Omega$ and probability measure $p$. For any $A,B\in\Omega$ and any $\set{i\in I}{}{C_i}\subseteq \Omega$, if $\set{i\in I}{}{C_i}$ is a GHR-RCCS of size $n\geq 2$ for $\dev{A}{B}$, then (\ref{equation: positive deviation}) obtains.
\end{proposition}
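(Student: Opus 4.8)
The plan is to reduce $\dev{A}{B}$ to a manifestly positive sum by exploiting the partition structure, much as one does for the Hofer-Szab\'{o}--R\'{e}dei positivity result. First I would introduce the shorthand $p_i = \prob{C_i}$, $a_i = \cprob{A}{C_i}$, and $b_i = \cprob{B}{C_i}$ for $i = 1, \ldots, n$; by (\ref{equation: grccs 0}) each $a_i$ and $b_i$ is well defined, and since $\set{i=1}{n}{C_i}$ partitions $\Omega$ we have $\sum_{i} p_i = 1$. Noting that, by Definition \ref{definition: correlation}, $\corr{A}{B}$ is simply $\prob{A\wedge B} - \prob{A}\prob{B}$ (conditioning on $\cup_i C_i = \Omega$ returns the unconditional probabilities), I would apply the law of total probability to obtain $\prob{A} = \sum_i p_i a_i$, $\prob{B} = \sum_i p_i b_i$, and $\prob{A\wedge B} = \sum_i p_i\, \cprob{A\wedge B}{C_i}$.

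Next I would invoke the generalised screening-off condition (\ref{equation: grccs 1}), which rewrites as $\cprob{A\wedge B}{C_i} = \ecorr{A}{B} + a_i b_i$, giving $\prob{A\wedge B} = \ecorr{A}{B} + \sum_i p_i a_i b_i$. Subtracting $\ecorr{A}{B}$ from $\corr{A}{B}$ then makes the expected correlation cancel, leaving
\begin{equation}
\dev{A}{B} = \sum_{i} p_i a_i b_i - \Big(\sum_{i} p_i a_i\Big)\Big(\sum_{i} p_i b_i\Big).
\end{equation}
The crucial step is to recognise the right-hand side as the covariance of the sequences $(a_i)$ and $(b_i)$ with respect to the weights $(p_i)$, and to symmetrise it through the identity
\begin{equation}
\dev{A}{B} = \frac{1}{2}\sum_{i}\sum_{j} p_i p_j (a_i - a_j)(b_i - b_j).
\end{equation}

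With this rewriting the conclusion is immediate: the diagonal terms ($i = j$) vanish, while for $i \neq j$ condition (\ref{equation: grccs 2}) forces each factor $(a_i - a_j)(b_i - b_j)$ to be strictly positive, and $p_i p_j > 0$ by (\ref{equation: grccs 0}); since $n \geq 2$ at least one off-diagonal pair exists, so the sum is strictly positive and $\dev{A}{B} > 0$, which is precisely (\ref{equation: positive deviation}). The only real manoeuvre, and the step I expect to carry the weight of the argument, is the symmetrisation identity: it is what converts the pairwise sign constraints (\ref{equation: grccs 2}) into a statement about the global covariance, and what allows the single expected-correlation parameter to drop out cleanly. Everything else is the law of total probability together with routine bookkeeping, and the argument specialises to the Hofer-Szab\'{o}--R\'{e}dei result for HR-RCCSs exactly when $\ecorr{A}{B} = 0$.
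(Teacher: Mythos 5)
Your proof is correct and follows essentially the same route as the paper: your symmetrisation identity is exactly the paper's Lemma \ref{lemma: grccs deviation} (equation \ref{equation: grccs deviation}), derived as there from the theorem of total probability with the expected correlation cancelling via $\sum_i \prob{C_i} = 1$, after which positivity follows from (\ref{equation: grccs 2}) and (\ref{equation: grccs 0}) just as in the paper's (omitted) final step. The only difference is presentational — you pass explicitly through the weighted-covariance form before symmetrising, which the paper's equation (\ref{equation: correlation general}) states directly — and your closing observation that the argument specialises to Hofer-Szab\'{o} and R\'{e}dei's result when $\ecorr{A}{B}=0$ matches the paper's own remark.
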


To show this, let us first prove the following lemma: 
\begin{lemma}\label{lemma: grccs deviation}
Let $(\Omega, p)$ be a classical probability space with $\sigma$-algebra of random events $\Omega$ and probability measure $p$. Let $A,B\in\Omega$ and let $\set{i\in I}{}{C_i}$  be a partition of $\Omega$ satisfying conditions (\ref{equation: grccs 0})-(\ref{equation: grccs 1}). Then:
\begin{equation}\label{equation: grccs deviation}
\dev{A}{B} = \frac{1}{2} \dsum{i,j\in I}{}{\prob{C_i}\prob{C_j}[\cprob{A}{C_i}-\cprob{A}{C_j}][\cprob{B}{C_i}-\cprob{B}{C_j}]}.
\end{equation}
\end{lemma}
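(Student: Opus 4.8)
The plan is to expand both sides of (\ref{equation: grccs deviation}) in terms of the conditional probabilities $\cprob{A}{C_i}$, $\cprob{B}{C_i}$ and the weights $\prob{C_i}$, and to check that the two expansions agree. First I would exploit the fact that $\set{i\in I}{}{C_i}$ is a partition, so that $\dsum{i\in I}{}{\prob{C_i}} = 1$, together with the law of total probability, to write $\prob{A\wedge B} = \dsum{i\in I}{}{\prob{C_i}\cprob{A\wedge B}{C_i}}$, $\prob{A} = \dsum{i\in I}{}{\prob{C_i}\cprob{A}{C_i}}$, and the analogous expression for $\prob{B}$.

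Next I would invoke condition (\ref{equation: grccs 1}), rewritten via Definition \ref{definition: correlation} as $\cprob{A\wedge B}{C_i} = \ecorr{A}{B} + \cprob{A}{C_i}\cprob{B}{C_i}$. Substituting this into the expression for $\prob{A\wedge B}$ and using $\dsum{i\in I}{}{\prob{C_i}} = 1$, the constant $\ecorr{A}{B}$ factors out as a single summand. This is the crucial observation: it is precisely because the $C_i$ form a partition that $\ecorr{A}{B}$ survives unreweighted rather than being spread across the sum. Subtracting $\prob{A}\prob{B}$ and then $\ecorr{A}{B}$, the definition of $\dev{A}{B}$ collapses to the compact identity
\[
\dev{A}{B} = \dsum{i\in I}{}{\prob{C_i}\cprob{A}{C_i}\cprob{B}{C_i}} - \prob{A}\prob{B}.
\]

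Finally I would expand the right-hand side of (\ref{equation: grccs deviation}) directly. Multiplying out $[\cprob{A}{C_i}-\cprob{A}{C_j}][\cprob{B}{C_i}-\cprob{B}{C_j}]$ produces four terms; summing over $i,j\in I$ and again using $\dsum{i\in I}{}{\prob{C_i}} = 1$, the two diagonal terms each reduce to $\dsum{i\in I}{}{\prob{C_i}\cprob{A}{C_i}\cprob{B}{C_i}}$, while the two cross terms each factor into $\prob{A}\prob{B}$. The factor $\tfrac{1}{2}$ then reproduces exactly the identity displayed above, completing the proof. I do not anticipate any genuine obstacle here: the argument is a routine, if slightly delicate, bookkeeping exercise, and the only point that demands care is tracking which summations reduce to $1$ and which factor into products of the marginals $\prob{A}$ and $\prob{B}$.
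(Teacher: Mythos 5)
Your argument is correct, and its mathematical content coincides with the paper's, though the two are organised differently in a way worth noting. The paper's proof starts from the general total-probability decomposition
\begin{equation*}
\corr{A}{B} \;=\; \frac{1}{2}\dsum{i,j\in I}{}{\prob{C_i}\prob{C_j}\big[\cprob{A}{C_i}-\cprob{A}{C_j}\big]\big[\cprob{B}{C_i}-\cprob{B}{C_j}\big]} \;+\; \dsum{i\in I}{}{\prob{C_i}\,\ccorr{A}{B}{C_{i}}},
\end{equation*}
an identity valid for any partition satisfying (\ref{equation: grccs 0}) irrespective of the screening-off condition, and only then substitutes (\ref{equation: grccs 1}) together with $\dsum{i\in I}{}{\prob{C_i}}=1$ to move $\ecorr{A}{B}$ to the left-hand side; the decomposition itself is asserted as a consequence of the theorem of total probability, its verification being essentially inherited from Hofer-Szab\'{o} and R\'{e}dei's treatment of the case $\ecorr{A}{B}=0$. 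You instead apply (\ref{equation: grccs 1}) at the outset, collapsing the deviation to
\begin{equation*}
\dev{A}{B} \;=\; \dsum{i\in I}{}{\prob{C_i}\cprob{A}{C_i}\cprob{B}{C_i}} \;-\; \prob{A}\prob{B},
\end{equation*}
and then verify by direct multiplication that the symmetrised double sum equals this quantity --- the two diagonal terms each reducing to $\dsum{i\in I}{}{\prob{C_i}\cprob{A}{C_i}\cprob{B}{C_i}}$ via $\dsum{j\in I}{}{\prob{C_j}}=1$, and the two cross terms each factoring as $\prob{A}\prob{B}$ via total probability, exactly as you describe. The routes are algebraically equivalent: your expansion of the double sum is precisely the computation needed to establish the identity the paper quotes without proof. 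What your version buys is self-containedness, filling in the step the paper delegates to citation; what it gives up is the reusable intermediate identity above, which holds without the screening-off hypothesis and cleanly isolates how (\ref{equation: grccs 1}) converts the residual term into $\ecorr{A}{B}$. One small point of care, which you implicitly handle: condition (\ref{equation: grccs 0}) is needed throughout for the conditional probabilities to be defined.
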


\begin{proof}
Let $(\Omega, p)$ be a classical probability space with $\sigma$-algebra of random events $\Omega$ and probability measure $p$. Let $A,B\in\Omega$ and let $\set{i\in I}{}{C_i}$  be a partition of $\Omega$ satisfying (\ref{equation: grccs 0}). From the theorem of total probability it follows that:
\setlength{\jot}{8  pt}
\begin{eqnarray}
\corr{A}{B} 	&   =  	& \frac{1}{2} \dsum{i,j = 1}{n}{\prob{C_{i}}\prob{C_{j}}\big[\cprob{A}{C_{i}}-\cprob{A}{C_{j}}\big]\big[\cprob{B}{C_{i}}-\cprob{B}{C_{j}}\big]} + \nonumber\\
		&	&  \frac{1}{2} \Bigg[\dsum{i = 1}{n}{\prob{C_{i}}\ccorr{A}{B}{C_{i}}} + \dsum{j = 1}{n}{\prob{C_{j}}\ccorr{A}{B}{C_{j}}} \Bigg]. 								 \label{equation: correlation general}
\end{eqnarray}
On the other hand, by hypothesis  $\set{i\in I}{}{C_i}$ is a partition of the given probability space, which implies that:
\setlength{\jot}{6  pt}
\begin{gather}
\dsum{i = 1}{n}{\prob{C_{i}}} = 1. 
\end{gather}
Further assuming (\ref{equation: grccs 1}) will therefore produce the following equality:
\setlength{\jot}{6  pt}
\begin{gather}\label{equation: grccs deviation variant}
\corr{A}{B} - \ecorr{A}{B}  =   \frac{1}{2} \dsum{i,j = 1}{n}{\prob{C_{i}}\prob{C_{j}}[\cprob{A}{C_{i}}-\cprob{A}{C_{j}}][\cprob{B}{C_{i}}-\cprob{B}{C_{j}}]},
\end{gather}
which in the light of (\ref{equation: deviation}) is but a different formulation of (\ref{equation: grccs deviation}).
\end{proof}

Demonstrating Proposition \ref{proposition: grccs positive deviation} on this basis would be straightforward, so we are omitting the details of the proof. One interesting thing to notice about this demonstration, however, is that setting $\ecorr{A}{B} = 0$ in (\ref{equation: grccs deviation variant}) would reduce it to the equation employed by Hofer-Szab\'{o} and R\'{e}dei to demonstrate that HR-RCCSs produce positive correlations. This fact, in itself, is further confirmation of the adequacy of GHR-RCCSs as a generalisation of HR-RCCSs.


\subsection{Existence of GHR-RCCSs}\label{existence of generalised hr-reichenbachian common cause systems}


Hofer-Szab\'{o} and R\'{e}dei \cite{hofer-szabo redei 2006} argue that a HR-RCCS of arbitrary finite size exists for every positively correlated pair of events, in some suitable extension of the original probability space. The discussion to follow will be dedicated to establish a similar result for GHR-RCCSs. Remarkably, it will turn out that not all positive deviations admit of a GHR-RCCS.

Hofer-Szab\'{o} and R\'{e}dei's proof proceeds by noticing that, in general, set $\set{i=1}{n}{C_i}$ is a HR-RCCS of size $n\geq 2$ for $\corr{A}{B}$ in probability space $(\Omega, p)$ if and only if the values of $\cprob{A}{C_{1}}$, ...., $\cprob{A}{C_{n}}$, $\cprob{B}{C_{1}}$, ...., $\cprob{B}{C_{n}}$, and $\prob{C_{1}}$, ...., $\prob{C_{n}}$ satisfy some specified constraints. They call any set $\set{i =1}{n}{a_{i}, b_{i}, c_{i}}$ of $3n$ numbers satisfying said constraints \emph{admissible} for $\corr{A}{B}$ and demonstrate that, for any two positively correlated events $A$ and $B$ and any $n\geq 2$, a set of $n$ admissible numbers for $\corr{A}{B}$ can be found. On this basis, they finally show how an extension of the given probability space can always be constructed, in which some partition $\set{i=1}{n}{C_i}$ exists such that $n\geq 2$ and the values of $\cprob{A}{C_{1}}$, ...., $\cprob{A}{C_{n}}$, $\cprob{B}{C_{1}}$, ...., $\cprob{B}{C_{n}}$, and $\prob{C_{1}}$, ...., $\prob{C_{n}}$ are admissible for $\corr{A}{B}$, thereby etablishing the existence of a HR-RCCS of size $n$ for $\corr{A}{B}$ in that space. 

The following proof will follow the broad logical structure of Hofer-Szab\'{o} and R\'{e}dei's argumentation. Our first step will consist in identifying the necessary and sufficient conditions that must be satisfied by the values of $\cprob{A}{C_{1}}$, ...., $\cprob{A}{C_{n}}$, $\cprob{B}{C_{1}}$, ...., $\cprob{B}{C_{n}}$, $\cprob{A\wedge B}{C_{1}}$, ...., $\cprob{A\wedge B}{C_{n}}$, and $\prob{C_{1}}$, ...., $\prob{C_{n}}$ to make $\set{i=1}{n}{C_i}$ a GHR-RCCS of size $n\geq 2$ for $\dev{A}{B}$. This, however, will be done in two stages, as some of the conditions that we are going to single out will be shared by the model to be developed in \S \ref{generalised m-reichenbachian common cause systems}. Let us begin by isolating these.

\begin{definition}\label{definition:quasi-admissible}
Let $(\Omega, p)$ be a classical probability space with $\sigma$-algebra of random events $\Omega$ and probability measure $p$. For any $A,B\in \Omega$ satisfying (\ref{equation: positive deviation}) and any $n \geq 2$, the set 
\begin{equation*}
\set{i = 1}{n}{a_{i}, b_{i}, c_{i}, d_{i}}
\end{equation*}
 of real numbers is called \emph{quasi-admissible* for} $\dev{A}{B}$  if and only if the following conditions hold:
\begin{gather}
\dsum{i = 1}{n}{a_{i}c_{i}} = \prob{A} 																																						\label{equation: adm a}		\\
\dsum{i = 1}{n}{b_{i}c_{i}} = \prob{B}																																							\label{equation: adm b}		\\
\dsum{i =1}{n}{c_{i}} = 1																																									\label{equation: adm partition}	\\
d_{i}- a_{i}b_{i}  = \ecorr{A}{B}																																\phantom{12 pt}	(i = 1, ..., n)				\label{equation: adm di}		\\
0 <a_{i}, b_{i}, d_{i} < 1  																																\phantom{12 pt}	(i = 1, ..., n)				\label{equation: adm aibi}	\\
0 < c_{i} < 1 																																		\phantom{12 pt}	(i = 1, ..., n).			\label{equation: adm ci}	 	
\end{gather}
\end{definition}
The attentive reader will have noticed that, for each $n\geq 2$, quasi-admissible sets include $4n$ numbers, whereas admissible sets, as defined by Hofer-Szab\'{o} and R\'{e}dei, include only $3n$ numbers. Moreover, while (\ref{equation: adm a})-(\ref{equation: adm partition}) and (\ref{equation: adm aibi})-(\ref{equation: adm ci}) are either identical to or straightforward generalizations of some of  Hofer-Szab\'{o} and R\'{e}dei's original conditions for admissible numbers, constraint (\ref{equation: adm partition}) is not. Similar changes are needed to avoid a logical mistake in their original proof, along the lines illustrated in more detail in \cite{mazzola evans 2017}.

To complete this part of the proof, we need to supplement quasi-admissible sets with one more condition:
\begin{definition}\label{definition: grccs admissible}
Let $(\Omega, p)$ be a classical probability space with $\sigma$-algebra of random events $\Omega$ and probability measure $p$. For any $A,B\in \Omega$ and any $n \geq 2$, a set $\set{i = 1}{n}{a_{i}, b_{i}, c_{i}, d_{i}}$ of real numbers  is called \emph{HR-admissible} for $\dev{A}{B}$ if and only if it is quasi-admissible for $\dev{A}{B}$ and it further satisfies
\begin{gather}\label{equation: grccs admissible}
[a_{i}-a_{j}][b_{i}-b_{j}]> 0    	\phantom{12 pt} (1, ..., n = i \neq j = 1, ..., n).	
\end{gather}
\end{definition}

The adequacy of the above definition is testified by the following lemma, whose proof is straightforward and which can consequently be omitted:
\begin{lemma}\label{lemma: grccs admissible}
Let $(\Omega, p)$ be a classical probability space with $\sigma$-algebra of random events $\Omega$ and probability measure $p$. For any $A,B\in \Omega$ and any $\set{i = 1}{n}{C_{i}}\subseteq\Omega$ where $n\geq 2$, the set $\set{i = 1}{n}{C_{i}}$ is a GHR-RCCS of size $n$ for $\dev{A}{B}$ if and only if there exists a set $\set{i = 1}{n}{a_{i}, b_{i}, c_{i}, d_{i}}$ of HR-admissible numbers for $\dev{A}{B}$ such that  
\begin{gather}
\prob{C_{i}} = c_{i} 									\phantom{12 pt}	(i = 1, ..., n)					\label{equation: adm 1} \\
\cprob{A}{C_{i}} = a_{i} 									\phantom{12 pt}	(i = 1, ..., n)					\label{equation: adm 2} \\
\cprob{B}{C_{i}} = b_{i} 									\phantom{12 pt}	(i = 1, ..., n)					\label{equation: adm 3} \\
\cprob{A\wedge B}{C_{i}} = d_{i}							\phantom{12 pt}	(i = 1, ..., n).				\label{equation: adm 4}
\end{gather}
\end{lemma}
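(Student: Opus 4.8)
The statement is a biconditional that merely re-expresses the defining conditions of a GHR-RCCS in terms of the associated family of conditional probabilities, so the plan is to prove each direction by a direct ``dictionary'' translation, using (\ref{equation: adm 1})--(\ref{equation: adm 4}) as the correspondence between the events $\set{i=1}{n}{C_i}$ and the numbers $\set{i=1}{n}{a_i,b_i,c_i,d_i}$. Essentially every defining clause of Definition \ref{definition: ghrrccs} matches a clause of HR-admissibility term by term, so the work is to check that the two columns line up and to secure the two non-mechanical points (the strict bounds and the partition property) flagged below.

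For the forward direction I would start from a GHR-RCCS $\set{i=1}{n}{C_i}$ and simply set $c_i:=\prob{C_i}$, $a_i:=\cprob{A}{C_i}$, $b_i:=\cprob{B}{C_i}$ and $d_i:=\cprob{A\wedge B}{C_i}$; these are well defined because (\ref{equation: grccs 0}) guarantees $\prob{C_i}\neq 0$. Each admissibility condition is then read off: (\ref{equation: adm a}) and (\ref{equation: adm b}) follow from the theorem of total probability applied to the partition, giving $\dsum{i=1}{n}{a_i c_i}=\prob{A}$ and likewise for $B$; (\ref{equation: adm partition}) is just $\dsum{i=1}{n}{\prob{C_i}}=1$; (\ref{equation: adm di}) is nothing but the definition of conditional correlation combined with (\ref{equation: grccs 1}), since $\ccorr{A}{B}{C_i}=d_i-a_ib_i=\ecorr{A}{B}$; and (\ref{equation: grccs admissible}) is literally (\ref{equation: grccs 2}) after substituting $a_i,b_i$ for the conditional probabilities. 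The backward direction runs the same dictionary in reverse: given HR-admissible numbers and a partition $\set{i=1}{n}{C_i}$ satisfying (\ref{equation: adm 1})--(\ref{equation: adm 4}), condition (\ref{equation: grccs 0}) follows from $c_i>0$ in (\ref{equation: adm ci}), condition (\ref{equation: grccs 1}) from (\ref{equation: adm di}) via the same identity $\ccorr{A}{B}{C_i}=d_i-a_ib_i$, and (\ref{equation: grccs 2}) from (\ref{equation: grccs admissible}).

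The points that actually require care --- and which I take to be the main obstacle --- are the strict bounds and the partition hypothesis, rather than the termwise matching. The strict inequalities $0<a_i,b_i,d_i<1$ in (\ref{equation: adm aibi}) do not follow from the GHR-RCCS conditions alone: if, say, $\cprob{A}{C_i}\in\{0,1\}$ one can still arrange (\ref{equation: grccs 1})--(\ref{equation: grccs 2}) (with $\ecorr{A}{B}=0$), so in the forward direction these bounds have to be secured from the standing non-degeneracy of $A$, $B$ and $A\wedge B$ --- i.e. that their conditional probabilities given each $C_i$ lie strictly inside $(0,1)$ --- while $0<c_i<1$ comes from (\ref{equation: grccs 0}) together with $n\geq 2$ and $\dsum{i=1}{n}{\prob{C_i}}=1$. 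Likewise, in the backward direction the numerical conditions (\ref{equation: adm a})--(\ref{equation: adm ci}) do not by themselves force $\set{i=1}{n}{C_i}$ to be mutually exclusive and jointly exhaustive; they only yield $\dsum{i=1}{n}{\prob{C_i}}=1$, so the partition property must be read as a standing assumption on the family $\set{i=1}{n}{C_i}$ (which is consistent with a GHR-RCCS being defined as a partition). Once these two provisos are made explicit, the equivalence reduces to the termwise matching of conditions described above, which is why the proof can reasonably be called straightforward.
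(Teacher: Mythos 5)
Your ``dictionary'' argument is exactly the proof the paper has in mind: the paper states that the proof of this lemma ``is straightforward and can consequently be omitted,'' and the omitted argument is precisely your termwise matching of (\ref{equation: grccs 0})--(\ref{equation: grccs 2}) against the HR-admissibility clauses via (\ref{equation: adm 1})--(\ref{equation: adm 4}). So on the mechanical side you and the paper coincide. What you add are the two provisos, and both are genuine rather than pedantic. The forward-direction worry is real and can be made concrete: with $\ecorr{A}{B}=0$, take a partition $\set{}{}{C_1, C_2}$ with $\prob{C_1}=\prob{C_2}=1/2$, $\cprob{A}{C_1}=\cprob{B}{C_1}=\cprob{A\wedge B}{C_1}=0$, $\cprob{A}{C_2}=\cprob{B}{C_2}=1/2$ and $\cprob{A\wedge B}{C_2}=1/4$. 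Then (\ref{equation: grccs 0})--(\ref{equation: grccs 2}) all hold and $\dev{A}{B}=1/16>0$, yet $a_1=d_1=0$ violates (\ref{equation: adm aibi}); since (\ref{equation: adm 1})--(\ref{equation: adm 4}) pin the numbers down uniquely, no HR-admissible set exists for this GHR-RCCS. The lemma as literally stated therefore fails at this margin, and your non-degeneracy proviso (or, alternatively, a weakening of (\ref{equation: adm aibi}) to non-strict bounds) is genuinely needed, not merely cautious.

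Your second proviso is likewise correct: the lemma quantifies over arbitrary $\set{i=1}{n}{C_{i}}\subseteq\Omega$ while a GHR-RCCS is by definition a partition, and (\ref{equation: adm 1})--(\ref{equation: adm 4}) together with $\dsum{i=1}{n}{c_i}=1$ do not force disjointness or exhaustiveness (overlapping events can have probabilities summing to $1$), so the right-to-left direction does require the partition property as a standing hypothesis. One small point you leave implicit: quasi-admissibility is only defined for pairs satisfying (\ref{equation: positive deviation}), so the forward direction also needs the paper's earlier proposition that a GHR-RCCS forces $\dev{A}{B}>0$; that should be cited. Finally, it is worth recording that the marginal failure you found is harmless downstream: in the proof of Lemma \ref{lemma: grccs admissible constraint} the strict inequality $ab>a_ib_i$ already yields $\varepsilon+ab>d_i\geq 0$ even when $d_i=0$, so Proposition \ref{proposition: existence of grccs} survives unchanged. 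In short, your proof is the paper's intended one, and your two flags identify exactly what the paper's ``straightforward'' elides.
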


The next step in our proof will be to establish the necessary and sufficient conditions for the existence of HR-admissible numbers for $\dev{A}{B}$. To this purpose, however, we shall need the following lemma:
\begin{lemma}\label{lemma: quasi-admissible variant}
Let $(\Omega, p)$ be a classical probability space with $\sigma$-algebra of random events $\Omega$ and probability measure $p$. Moreover, let $A,B \in \Omega$ and let $\set{i = 1}{n}{C_{i}}\subseteq\Omega$ with $n\geq 2$. Then, any set $\set{i = 1}{n}{a_{i}, b_{i}, c_{i}, d_{i}}$ of real numbers satisfying identities (\ref{equation: adm 1})-(\ref{equation: adm 4}) is quasi-admissible for $\dev{A}{B}$ if and only if it further satisfies (\ref{equation: adm aibi})-(\ref{equation: adm ci}) as well as
\begin{equation}
a_{n} = \frac{a-\dsum{k=1}{n-1}{c_{k} a_{k}}}{1-\dsum{k=1}{n-1}{c_{k}}}																																	\label{equation: an}		
\end{equation}
\begin{equation}
b_{n} = \frac{b-\dsum{k=1}{n-1}{c_{k} b_{k}}}{1-\dsum{k=1}{n-1}{c_{k}}}																																	\label{equation: bn}		
\end{equation}
\begin{equation}
c_{n} = 1- \dsum{k=1}{n-1}{c_{k}} 																																							\label{equation: cn}		
\end{equation}
\begin{equation}
d_{n} = \varepsilon + \frac{\left[a - \dsum{k =1}{n-1}{a_{k}c_{k}}\right]\left[b - \dsum{k = 1}{n-1}{b_{k}c_{k}}\right]}{\left[ 1 - \dsum{k = 1}{n-1}{c_{k}} \right ]^{2}} 																			\label{equation: dn}		
\end{equation}
\begin{equation}
d_{k} = \varepsilon +  a_{k}b_{k} 																															\phantom{12 pt}	(k = 1, ..., n-1),			\label{equation: di}		
\end{equation}
where
\setlength{\jot}{6  pt}
\begin{gather}
a = \prob{A}																		\label{equation: a}			\\
b = \prob{B} 																	\label{equation: b}			\\
\varepsilon = \ecorr{A}{B}.															\label{equation: epsilon}
\end{gather}
\end{lemma}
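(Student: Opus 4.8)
The plan is to treat the lemma as a reparametrization result: since the inequalities (\ref{equation: adm aibi})-(\ref{equation: adm ci}) occur on both sides of the claimed equivalence, the real content is to show that, under the standing identities (\ref{equation: adm 1})-(\ref{equation: adm 4}) (which serve only to fix the constants (\ref{equation: a})-(\ref{equation: epsilon})) together with those inequalities, the defining equations of quasi-admissibility --- the total-probability constraints (\ref{equation: adm a})-(\ref{equation: adm partition}) and the screening conditions (\ref{equation: adm di}) --- are jointly equivalent to the explicit formulas (\ref{equation: an})-(\ref{equation: di}). The latter simply solve the former for the distinguished index $n$, expressing $a_n, b_n, c_n, d_n$ and the remaining $d_k$ in terms of the first $n-1$ triples $a_k, b_k, c_k$. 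Thus the proof is a direct, if slightly tedious, algebraic manipulation carried out in both directions.

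For the forward direction, I would first isolate $c_n$ in the partition constraint (\ref{equation: adm partition}) to obtain (\ref{equation: cn}). Because (\ref{equation: adm ci}) guarantees $0 < c_n < 1$, the quantity $1 - \sum_{k=1}^{n-1} c_k = c_n$ is nonzero, so I may separate the $n$-th summand in (\ref{equation: adm a}) and (\ref{equation: adm b}) and divide through by $c_n$, producing (\ref{equation: an}) and (\ref{equation: bn}). Finally I would split the screening conditions (\ref{equation: adm di}): read for $i = 1, \dots, n-1$ they give (\ref{equation: di}) verbatim, while for $i = n$, after substituting the just-derived expressions (\ref{equation: an})-(\ref{equation: bn}) for $a_n$ and $b_n$ into $d_n = \varepsilon + a_n b_n$, they yield (\ref{equation: dn}).

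The backward direction reverses each step. Equation (\ref{equation: cn}) reproduces (\ref{equation: adm partition}); multiplying (\ref{equation: an}) and (\ref{equation: bn}) through by $c_n$ (again identified with $1 - \sum_{k=1}^{n-1} c_k$ via (\ref{equation: cn})) and reinserting the $n$-th term recovers (\ref{equation: adm a}) and (\ref{equation: adm b}); and (\ref{equation: di}) together with (\ref{equation: dn}) reconstitute (\ref{equation: adm di}) for every $i$, where for $i = n$ one recognizes the fraction in (\ref{equation: dn}) as the product $a_n b_n$ by (\ref{equation: an})-(\ref{equation: bn}). The only point requiring care --- and the sole potential obstacle --- is the nonvanishing of the denominator $1 - \sum_{k=1}^{n-1} c_k$; this is legitimate precisely because (\ref{equation: cn}) identifies it with $c_n$, which the assumed inequality (\ref{equation: adm ci}) keeps strictly positive. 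I would also remark that the distinguished role played by the index $n$ is immaterial, so that (\ref{equation: an})-(\ref{equation: di}) genuinely parametrize quasi-admissible sets rather than impose any additional restriction.
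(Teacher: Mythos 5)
Your proof is correct and follows essentially the same route as the paper's: both sides of the equivalence are handled by solving the quasi-admissibility constraints (\ref{equation: adm a})-(\ref{equation: adm di}) for the $n$-th coordinates and back, with the division legitimised because (\ref{equation: adm ci}) keeps $c_n = 1-\sum_{k=1}^{n-1}c_k$ strictly positive. The only cosmetic difference is that the paper channels the $d_n$ step through the auxiliary identity (\ref{equation: dn general}) attributed to the theorem of total probability, whereas you substitute (\ref{equation: an})-(\ref{equation: bn}) directly into $d_n = \varepsilon + a_n b_n$; the underlying algebra is identical.
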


\begin{proof}
Let $(\Omega, p)$ be a classical probability space with $\sigma$-algebra of random events $\Omega$ and probability measure $p$. Moreover,  let $A,B\in \Omega$ satisfy (\ref{equation: positive deviation}) and let the set $\set{i = 1}{n}{a_{i}, b_{i}, c_{i}, d_{i}}$ of $n\geq 2$ real numbes  satisfy conditions  (\ref{equation: adm 1})-(\ref{equation: adm 4}) and (\ref{equation: adm aibi})-(\ref{equation: adm ci}). Finally, let (\ref{equation: a})-(\ref{equation: epsilon}) be in place. 

Given the aforesaid hypothesis,  (\ref{equation: adm a})-(\ref{equation: adm partition}) can be directly obtained from (\ref{equation: an})-(\ref{equation: cn}) thanks to the theorem of total probability, and vice-versa. Therefore, we only need to show that (\ref{equation: adm di}) obtains if and only if (\ref{equation: dn})-(\ref{equation: di}) do. To this purpose, let us first observe that, as a further consequence of the theorem of total probability, the following equality holds: 
\begin{equation}\label{equation: dn general}
d_{n} =   [d_{n} - a_{n}b_{n}] + \frac{\left[a - \dsum{k = i}{n-1}{a_{k}c_{k}}\right]\left[b - \dsum{k = 1}{n-1}{b - b_{k}}\right]}{\left[ 1 - \dsum{k = 1}{n-1}{c_{k}}\right ]^{2}}.																									
\end{equation}
Thanks to (\ref{equation: dn general}) it is then immediate to verify that (\ref{equation: dn})-(\ref{equation: di}) are simultaneously satisfied if (\ref{equation: adm di}) is. Conversely, let us suppose that (\ref{equation: dn})-(\ref{equation: di}) are the case. Then (\ref{equation: dn}) and (\ref{equation: dn general}) will jointly imply that 
\begin{gather}
d_{n} -  a_{n}b_{n} = \varepsilon, 
\end{gather}
which, together with (\ref{equation: di}), straighforwardly implies (\ref{equation: adm di}), as required.
\end{proof}

Endowed with the above result, we are now in a position to determine the necessary conditions so that, in general, HR-admissible numbers $\set{i = 1}{n}{a_{i}, b{i}, c_{i}, d_{i}}$ could exist for $\dev{A}{B}$ and $n\geq 2$. Quite interestingly:

\begin{lemma}\label{lemma: grccs admissible constraint}
Let $(\Omega, p)$ be a classical probability space with $\sigma$-algebra of random events $\Omega$ and probability measure $p$. For any $A,B\in \Omega$, no HR-admissible numbers exist for $\dev{A}{B}$ if 
\begin{gather}\label{equation: grccs admissible constraint}
\ecorr{A}{B} + \prob{A}\prob{B} \leq 0.
\end{gather}
\end{lemma}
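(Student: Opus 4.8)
The plan is to argue by contradiction: assume a set $\set{i=1}{n}{a_i, b_i, c_i, d_i}$ of HR-admissible numbers for $\dev{A}{B}$ exists under hypothesis (\ref{equation: grccs admissible constraint}), and derive an impossibility. The driving observation is that HR-admissibility forces each product $a_i b_i$ to \emph{exceed} $\prob{A}\prob{B}$, whereas the partition and averaging constraints force the smallest such product to fall \emph{below} $\prob{A}\prob{B}$.

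First I would record two preliminary facts. Since (\ref{equation: adm a}) reads $\sum_i a_i c_i = \prob{A}$ with every $c_i > 0$ and every $a_i > 0$ by (\ref{equation: adm aibi})-(\ref{equation: adm ci}), it follows that $\prob{A} > 0$, and symmetrically $\prob{B} > 0$. Next, combining the defining relation (\ref{equation: adm di}), namely $d_i = a_i b_i + \ecorr{A}{B}$, with the positivity constraint $d_i > 0$ from (\ref{equation: adm aibi}), gives $a_i b_i > -\ecorr{A}{B}$ for every $i$. Hypothesis (\ref{equation: grccs admissible constraint}) rewrites as $-\ecorr{A}{B} \geq \prob{A}\prob{B}$, so I obtain the key inequality $a_i b_i > \prob{A}\prob{B}$ for all $i$.

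The final step exploits condition (\ref{equation: grccs admissible}). Let $m$ be an index minimising $a_i$. Since (\ref{equation: grccs admissible}) gives $[a_m - a_j][b_m - b_j] > 0$ for every $j \neq m$, and $a_m - a_j < 0$ (strictly, since the $a_i$ must be pairwise distinct for the products to be strictly positive), the factor $b_m - b_j$ must also be strictly negative; thus $m$ simultaneously minimises $b_i$. Because $\prob{A} = \sum_i c_i a_i$ is a convex combination of the $a_i$ with all weights strictly positive and at least two distinct values (as $n \geq 2$), we get $a_m < \prob{A}$, and likewise $b_m < \prob{B}$. Multiplying these strict inequalities between positive quantities yields $a_m b_m < \prob{A}\prob{B}$, directly contradicting the key inequality $a_m b_m > \prob{A}\prob{B}$ established above.

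I expect the main obstacle to be the third step, specifically the argument that a single index simultaneously minimises both the $a_i$ and the $b_i$: this comonotonicity of the two sequences, supplied by (\ref{equation: grccs admissible}), is precisely what makes the contradiction possible, and without it the inequalities $a_i b_i > \prob{A}\prob{B}$ could conceivably be reconciled with the averaging constraints. One should also take care to justify the strictness $a_m < \prob{A}$, which rests on the $a_i$ being pairwise distinct together with $n \geq 2$ and $c_i > 0$.
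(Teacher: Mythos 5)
Your proposal is correct and is essentially the paper's own argument: both proofs combine $d_i = a_ib_i + \ecorr{A}{B} > 0$ (from (\ref{equation: adm di}) and (\ref{equation: adm aibi})) with the comonotonicity forced by (\ref{equation: grccs admissible}) together with the averaging constraints (\ref{equation: adm a})-(\ref{equation: adm partition}) and $c_i > 0$ to locate a single index $i$ at which $a_ib_i < \prob{A}\prob{B}$, which then clashes with the hypothesis $\ecorr{A}{B} + \prob{A}\prob{B} \leq 0$. Your argmin formulation of the common-index step (the strict minimiser of the pairwise-distinct $a_i$ also minimises the $b_i$, and lies strictly below the full-support convex combination $\prob{A}$) is a slightly tidier packaging of the paper's relabelling-plus-reductio argument, but the underlying idea is the same.
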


\begin{proof}
Let $(\Omega, p)$ be a classical probability space with $\sigma$-algebra of random events $\Omega$ and probability measure $p$, and let $A, B\in \Omega$ be arbitrarily chosen. Lemma \ref{lemma: grccs admissible constraint} will be established by contraposition, so let us assume that, for some $n\geq 2$, a set $\set{i = 1}{n}{a_{i}, b_{i}, c_{i}, d_{i}}$ of HR-admissible numbers does exist for $\dev{A}{B}$. Moreover, let us assume identities (\ref{equation: a})-(\ref{equation: epsilon}). 

To prove our lemma, two preliminary steps will be required. First, we shall prove that some $a_{j}, b_{k}\in \set{i = 1}{n}{a_{i}, b_{i}, c_{i}, d_{i}}$ exist such that 
\begin{gather}
a - a_{j} >  0 	\label{equation: grccs preempter of a}\\
b - b_{k} >  0.	\label{equation: grccs preempter of b} 
\end{gather}
Next, on that basis, we shall demonstrate that at least some such $a_{j}, a_{k}\in \set{i = 1}{n}{a_{i}, b_{i}, c_{i}, d_{i}}$ exist, for which $j = k$.

To estabish the first claim, let us begin by noticing that, as a plain consequence of (\ref{equation: adm a})-(\ref{equation: adm b}) and (\ref{equation: adm partition}):
\begin{gather}
0 = a - a =\dsum{i = 1}{n}{ac_{i}} - \dsum{i = 1}{n}{a_{i}c_{i}} = \dsum{i = 1}{n}{c_{i}(a - a_{i})} \\
0 = b - b =\dsum{i = 1}{n}{bc_{i}} - \dsum{i = 1}{n}{b_{i}c_{i}} = \dsum{i = 1}{n}{c_{i}(b - b_{i})}.
\end{gather}
On the other hand, (\ref{equation: adm ci}) demands that $c_{i} > 0$ for all $i = 1, ..., n$, while (\ref{equation: grccs admissible}) implies that $a_{j} = a$ and $b_{k} = b$ can be satisfied by at most one term $a_{j}$ and one term $b_{k}$ for $j,k = 1, ...,n\geq 2$. The above equalities therefore imply that $a- a_{i}$ be positive for some values of $i$ and negative for others, while similarly $b- b_{i}$ be positive for some values of $i$ and negative for others. This is enough to prove  (\ref{equation: grccs preempter of a}) and (\ref{equation: grccs preempter of b}), as desired.


To prove our second auxiliary result, let us first relabel all numbers in $\set{i = 1}{n}{a_{i}, b_{i}, c_{i}, d_{i}}$ so that 
\begin{gather}
a_{1} < ... <  a_{k} < a \leq a_{k+1} < ... < a_{n}.
\end{gather}
This in turn implies that
\begin{gather}\label{equation: grccs reshuffle variant}
a_{i} - a_{j} < 0 \phantom{12 pt} i = 1, ..., k; \phantom{6 pt} j = k+1, ..., n.  
\end{gather}
Now, let us proceed by reductio, and let us assume that 
\begin{gather}
b_{i} - b > 0 \phantom{12 pt} i = 1, ..., k.
\end{gather}
Then, according to the result previously established, some $b_{j}\in \set{i= k+1}{n}{b_{i}}\subset \set{i = 1}{n}{a_{i}, b_{i}, c_{i}, d_{i}}$ should exist such that 
\begin{gather}
b - b_{j} > 0. 	
\end{gather}
However, in that case 
\begin{gather}
b_{i} - b_{j} > 0 \phantom{12 pt} i = 1, ..., k
\end{gather}
would ensue. Together with (\ref{equation: grccs reshuffle variant}), this would imply
\begin{gather}
[a_{i} - a_{j}][b_{i} - b_{j}] < 0 \phantom{12 pt} i = 1, ..., k
\end{gather}
consequently contradicting (\ref{equation: grccs admissible}). By reductio, this shows that  (\ref{equation: grccs preempter of a})-(\ref{equation: grccs preempter of b}) must be satisfied for some $a_{j}, b_{k}\in \set{i = 1}{n}{a_{i}, b_{i}, c_{i}, d_{i}}$ where $j = k$.

Let us now come to the main part of our proof. Thanks to the results so established, we can now safely claim that, for any set $\set{i = 1}{n}{a_{i}, b_{i}, c_{i}, d_{i}}$ of HR-admissible numbers, some $a_{i},b_{i}\in \set{i = 1}{n}{a_{i}, b_{i}, c_{i}, d_{i}}$ are always to be found such that 
\begin{gather}
ab - a_{i}b_{i} > 0.
\end{gather}
Together with (\ref{equation: adm di}), this implies that
\begin{gather}
ab + \varepsilon > a_{i}b_{i} + \varepsilon = d_{i} > 0,
\end{gather}
contradicting (\ref{equation: grccs admissible constraint}). By contraposition, this means that whenever (\ref{equation: grccs admissible constraint}) is satisfied, no set $\set{i = 1}{n}{a_{i}, b_{i}, c_{i}, d_{i}}$ of real numbers can satify (\ref{equation: adm di}) given the other conditions for a HR-admissible set for $\dev{A}{B}$. Hence, no HR-admissible set can exist for $\dev{A}{B}$. 
\end{proof}

Let us now move to the sufficient condition for the existence of HR-admissible numbers $\set{i = 1}{n}{a_{i}, b_{i}, c_{i}, d_{i}}$ for $\dev{A}{B}$ and $n\geq 2$. Remarkably, this turns out to be the same as the necessary condition:
\begin{lemma}\label{lemma: grccs-admissible numbers exist}
Let $(\Omega, p)$ be a classical probability space with $\sigma$-algebra of random events $\Omega$ and probability measure $p$. For any $A,B\in \Omega$ satisfying (\ref{equation: positive deviation}), a set $\set{i = 1}{n}{a_{i}, b_{i}, c_{i}, d_{i}}$ of HR-admissible numbers for $\dev{A}{B}$ exists for each $n\geq 2$ if
\begin{gather}\label{equation: admissible grccs admissible}
\ecorr{A}{B} + \prob{A}\prob{B} > 0.
\end{gather}
\end{lemma}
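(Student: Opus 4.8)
The plan is to establish sufficiency by an explicit construction of a single set $\set{i=1}{n}{a_i,b_i,c_i,d_i}$ for each $n\geq 2$, verifying each clause of Definitions~\ref{definition:quasi-admissible} and~\ref{definition: grccs admissible} by hand rather than invoking Lemma~\ref{lemma: quasi-admissible variant} (which presupposes events $C_i$ already realising the numbers). Write $a=\prob{A}$, $b=\prob{B}$ and $\varepsilon=\ecorr{A}{B}$ as in (\ref{equation: a})--(\ref{equation: epsilon}), and assume as usual that $A$ and $B$ are non-degenerate, so that $0<a,b<1$. The first thing I would record is a two-sided bound on the quantity $\varepsilon+ab$: its positivity is exactly the hypothesis (\ref{equation: admissible grccs admissible}), while the positive-deviation assumption (\ref{equation: positive deviation}) gives $\varepsilon<\corr{A}{B}=\prob{A\wedge B}-ab$, whence $\varepsilon+ab<\prob{A\wedge B}\leq 1$. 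Thus $0<\varepsilon+ab<1$, and this pair of strict inequalities is what the whole construction will rest on.

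For the construction I would take the weights uniform, $c_i=1/n$, which instantly secures (\ref{equation: adm partition}) and (\ref{equation: adm ci}). For the conditional probabilities I would spread the $a_i$ and the $b_i$ in small strictly increasing arithmetic progressions centred on $a$ and $b$ respectively, say $a_i=a+\left(i-\tfrac{n+1}{2}\right)\delta$ and $b_i=b+\left(i-\tfrac{n+1}{2}\right)\gamma$ with $\delta,\gamma>0$ to be chosen small. Centring makes the progressions average to $a$ and $b$, so that $\tfrac1n\sum_i a_i=a$ and $\tfrac1n\sum_i b_i=b$, which with $c_i=1/n$ yields (\ref{equation: adm a})--(\ref{equation: adm b}); strict monotonicity makes $a_i-a_j$ and $b_i-b_j$ share a sign for every $i\neq j$, securing the co-monotonicity condition (\ref{equation: grccs admissible}). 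Finally I would set $d_i=\varepsilon+a_ib_i$ for every $i$, which builds (\ref{equation: adm di}) in by definition.

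The only clause left to check is the range condition (\ref{equation: adm aibi}), and this is where the argument genuinely uses the hypotheses and constitutes the main obstacle. For small enough $\delta,\gamma$ the values $a_i,b_i$ stay inside $(0,1)$ because they lie within a controllable distance of $a,b\in(0,1)$; the delicate part is $0<d_i<1$. Since $d_i=\varepsilon+a_ib_i$ depends continuously on $\delta,\gamma$ and tends to $\varepsilon+ab$ as $\delta,\gamma\to 0$, the two strict bounds $0<\varepsilon+ab<1$ recorded above let me fix $\delta,\gamma$ small enough that $d_i$ is trapped strictly between $0$ and $1$ for all $i$ simultaneously. I would make this quantitative by choosing $\delta,\gamma$ below an explicit threshold depending on $\min\{a,1-a,b,1-b\}$ and on the gaps $\varepsilon+ab$ and $1-(\varepsilon+ab)$, and by bounding $|a_ib_i-ab|$ uniformly in $i$. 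With $\delta,\gamma$ so chosen, all of (\ref{equation: adm a})--(\ref{equation: adm ci}) and (\ref{equation: grccs admissible}) hold, so the set is HR-admissible; since the construction works verbatim for every $n\geq 2$, the lemma follows.
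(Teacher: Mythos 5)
Your construction is correct, and it proves the lemma by a genuinely different route from the paper's. The paper argues by induction on $n$: via Lemma~\ref{lemma: quasi-admissible variant} it treats each HR-admissible set as determined by $3n-3$ free parameters, settles the base case $n=2$ by letting $c_1\rightarrow 0$ (so that $a_2\rightarrow a$, $b_2\rightarrow b$, $d_2\rightarrow \varepsilon+ab$) with a case split on the sign of $\varepsilon$ to choose $a_1,b_1$ (conditions (\ref{equation: a1b1 i})--(\ref{equation: a1b1 ii})), and then grows an HR-admissible $m$-set into an $(m+1)$-set by introducing new small parameters and another limiting argument. You instead exhibit one closed-form witness uniformly in $n$: equal weights $c_i=1/n$, centred strictly increasing progressions for the $a_i,b_i$, and $d_i=\varepsilon+a_ib_i$, with every range condition secured by a single perturbation bound around the point $(a,b,\varepsilon+ab)$. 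Your pivotal inequality $0<\varepsilon+ab<1$ is derived soundly --- positivity is exactly (\ref{equation: admissible grccs admissible}), while (\ref{equation: positive deviation}) gives $\varepsilon+ab<\prob{A\wedge B}\leq 1$ --- and it does precisely the work that (\ref{equation: positive deviation}) and (\ref{equation: admissible grccs admissible}) do in the paper's bounds on $d_1$ and $d_2$. Your route buys simplicity: no induction, no sign-of-$\varepsilon$ cases, and no reliance on Lemma~\ref{lemma: quasi-admissible variant}, which, as you rightly observe, is stated for numbers already realised by events through (\ref{equation: adm 1})--(\ref{equation: adm 4}) and therefore sits awkwardly in a pure-numbers existence proof; you verify Definition~\ref{definition:quasi-admissible} and (\ref{equation: grccs admissible}) directly. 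The paper's route buys two things in exchange: the parameter count displays a $(3n-3)$-parameter family of solutions rather than a single witness, and its $n=2$ base case is reused verbatim for the M-admissible analogue in Lemma~\ref{lemma: gmccs-admissible numbers exist}, where your centred progression would need a small tweak (for odd $n$ the middle index yields $a_i=a$, violating the strict inequality in (\ref{equation: gmccs admissible})). Finally, two caveats you share with the paper rather than introduce: the assumption $0<a,b<1$ is the same appeal the paper makes at (\ref{equation: ab constraint}); and neither construction makes $\sum_{i=1}^{n}c_id_i$ equal to $\prob{A\wedge B}$ --- a constraint Definition~\ref{definition:quasi-admissible} never imposes, so it is immaterial for the lemma as stated, though it becomes relevant when the numbers are later realised by an actual partition in the proof of Proposition~\ref{proposition: existence of grccs}.
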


\begin{proof}
Let $(\Omega, p)$ be a classical probability space with $\sigma$-algebra of random events $\Omega$ and probability measure $p$. Moreover, let $A,B\in \Omega$ satisfy (\ref{equation: positive deviation}) and (\ref{equation: admissible grccs admissible}).

To start with, let us observe that (\ref{equation: di}) is in fact a system of $n-1$ equations, namely one for each value of $i = 1, ..., n-1$. Therefore,  (\ref{equation: an})-(\ref{equation: di}) jointly comprise a system of $4 + (n - 1) = n + 3$ equations in $4 n$ variables. This means that each HR-admissible set for $\dev{A}{B}$ is determined by a set of $4n - (n+3) = 3n- 3$ parameters, for every $n\geq 2$. To establish the existence of such set, we accordingly need to prove that such parameters exist. To this purpose, let numbers $a$, $b$ and $\varepsilon$ be understood as per (\ref{equation: a})-(\ref{equation: epsilon}). Proof will proceed by induction on the cardinality of $n$. 

Let us begin by assuming $n = 2$ as our inductive basis. This has the effect of transforming (\ref{equation: an})-(\ref{equation: di}), respectively, into:
\setlength{\jot}{8  pt}
\begin{gather}
a_{2} = \frac{a - c_{1} a_{1}}{1-c_{1}}																																			\label{equation: a2}		\\ 
b_{2} = \frac{b- c_{1}  b_{1}}{1-c_{1}}																																			\label{equation: b2}		\\		
c_{2} = 1- c_{1}																																							\label{equation: c2}		\\ 
d_{2} = \varepsilon + \frac{[a - a_{1}c_{1}][b - b_{1}c_{1}]}{[ 1 - c_{1}]^{2}} 																													\label{equation: d2}		\\
d_{1} -  a_{1}b_{1} = \varepsilon. 																																				\label{equation: d1}		     											\end{gather}	
Since $a$, $b$ and $\varepsilon$ are known by hypothesis, choosing numbers $c_{1}$, $a_{1}$ and $b_{1}$ will therefore suffice to fix the values of all $4n = 8$ variables in the system. Let us accordingly constrain $c_{1}$ so that:
\setlength{\jot}{6  pt}
\begin{gather}
c_{i} \rightarrow 0.
\end{gather}
Owing to this, (\ref{equation: a2})-(\ref{equation: d2}) immediately produce
\begin{gather}
a_{2} \rightarrow a \\
b_{2} \rightarrow b \\
c_{2} \rightarrow 1 \\
d_{2} \rightarrow \varepsilon+ ab,																																				\label{equation: d2 limit}
\end{gather}
while on the other hand (\ref{equation: positive deviation}) directly requires that 
\begin{gather} 
1 > a,b > 0, 			\label{equation: ab constraint}
\end{gather}
as it would be easy to verify. Taken together, this ensures that
\begin{gather}
1 > a_{2}, b_{2} > 0 																																						\label{equation: a1a2b1b2}	\\
1 >  c_{1}, c_{2} > 0,																																						\label{equation: c1c2}	 	
\end{gather}
while (\ref{equation: admissible grccs admissible}) and (\ref{equation: positive deviation}) imply that
\begin{gather}
1 > d_{2} > 0.
\end{gather}
To determine the remaining numbers, we further need to set $a_{1}$ and $b_{1}$. In this case, our choice will depend on the value of $\varepsilon$, as follows:
\setlength{\jot}{12  pt}
\begin{gather}
\varepsilon \geq 0 	\phantom{12 pt} \begin{cases} a > a_{1} > 0 \\	b > b_{1} > 0 \end{cases} \label{equation: a1b1 i} \\
\varepsilon < 0	\phantom{12 pt} \begin{cases} 1 > a_{1} > 0 \\	1 > b_{1} > 0 \end{cases} \label{equation: a1b1 ii}
\end{gather}
Either option is allowed by (\ref{equation: ab constraint}), and either will ensure that
\begin{gather}
1 > d_{i} > 0,
\end{gather}
as it would be straightforward to check with the aid of (\ref{equation: di}), (\ref{equation: positive deviation}) and (\ref{equation: admissible grccs admissible}). Thanks to Lemma (\ref{lemma: quasi-admissible variant}), this is enough to establish that some set $\set{i = 1}{n}{a_{i}, b_{i}, c_{i}, d_{i}}$ of quasi-admissible numbers exist for $\dev{A}{B}$ if $n = 2$. To further show that such set is HR-admissible for $\dev{A}{B}$, we only need to observe that (\ref{equation: grccs admissible}) can be obtained from both (\ref{equation: a1b1 i}) and (\ref{equation: a1b1 ii}), owing to (\ref{equation: adm a})-(\ref{equation: adm partition}).

Let us now assume, as our inductive hypothesis, that some set $\set{i = 1}{m}{a_{i}, b_{i}, c_{i}, d_{i}}$ be HR-admissible for $\dev{A}{B}$, where $n = m > 2$. To prove that a HR-admissible set for $\dev{A}{B}$ also exists if $n = m +1$, let us consider the set 
\begin{equation*}
\set{j = 1}{m-2}{a_{j}, b{j}, c_{j}, a_{m-1}, b_{m-1}}\subset\set{i = 1}{m}{a_{i}, b_{i}, c_{i}, d_{i}},
\end{equation*}
and let us choose numbers $a'_{m}  , b'_{m}, c'_{m-1}, c'_{m} $ such that: 
\begin{gather}
a_{j} > a'_{m}  > 0 			\phantom{12 pt}	( j = 1, ..., m-1)	\label{equation: a'm} \\
b_{j} > b'_{m}  > 0 			\phantom{12 pt}	( j = 1, ..., m-1)	\label{equation: b'm} \\
c_{m}, c_{m-1} > c'_{m-1} > 0 								\label{equation: c'm-1}\\
1 > c'_{m} > 0.				 						\label{equation: c'm}
\end{gather}
Given (\ref{equation: an})-(\ref{equation: di}), the set
\begin{gather*}
\set{j = 1}{m-2}{a_{j}, b{j}, c_{j}, a_{m-1}, b_{m-1}}\cup\set{}{}{a'_{m}, b'_{m}, c'_{m-1}, c'_{m}} 
\end{gather*}
of $3(m-1)+3 = 3(m+1)-3 = 3n-3$ parameters will then suffice to determine $4(m+1)$ numbers:
\begin{gather*}
\set{j = 1}{m-2}{a_{j}, b_{j}, c_{j}, d_{j}, a_{m-1}, b_{m-1}, c'_{m-1}, d_{m-1}, a'_{m}, b'_{m}, c'_{m}, d'_{m}, a_{m+1}, b_{m+1}, c_{m+1}, d_{m+1}}.
\end{gather*}
Because numbers $\set{j = 1}{m-1}{a_{j}, b_{j}, c_{j}, d_{j}, a_{m-1}, b_{m-1}, c'_{m-1}, d_{m-1}, a'_{m}, b'_{m}, c'_{m},}$ satisfy conditions (\ref{equation: adm di})-(\ref{equation: adm ci}) and (\ref{equation: grccs admissible}) by hypothesis, all we need to show is that said constraints be also satisfied by the remaining numbers $\set{}{}{d'_{m}, a_{m+1}, b_{m+1}, c_{m+1},  d_{m+1}}$. To this purpose, let us first notice that  (\ref{equation: adm aibi}) must be true of $d'_{m}$ by virtue of (\ref{equation: adm di}) and (\ref{equation: a'm})-(\ref{equation: b'm}). Next, thanks to (\ref{equation: an})-(\ref{equation: dn}),  it will be sufficient to suppose that
\begin{gather}
a'_{m} \rightarrow 0 \\
b'_{m} \rightarrow 0 \\
c'_{m} \rightarrow c_{m-1}-c'_{m-1}
\end{gather}
to obtain
\begin{gather}
a_{m+1} \rightarrow a_{m}\\
b_{m+1} \rightarrow b_{m}\\
c_{m+1} \rightarrow c_{m}\\
d_{m+1} \rightarrow d_{m}
\end{gather}
which we already know, by our inductive hypothesis, to satisfy (\ref{equation: adm di})-(\ref{equation: adm ci}). Moreover, (\ref{equation: a'm}) and (\ref{equation: b'm}), along with the inductive assumption whereby
\begin{gather}
[a_{m} - a_{i}][b_{m}-b_{i}] > 0 \phantom{12 pt} (i = 1, ..., m-1),
\end{gather}
ensures that 
\begin{gather}
[a_{m+1} - a_{i}][b_{m+1}-b_{i}] > 0 \phantom{12 pt} (i = 1, ..., m),
\end{gather}
which together with our inductive hypothesis suffices to establish (\ref{equation: grccs admissible}).  Due to Lemma \ref{lemma: quasi-admissible variant} and Definition \ref{definition: grccs admissible}, the set of $4(m+1)$ numbers so determined is therefore HR-admissible for $\dev{A}{B}$. Lemma \ref{lemma: grccs-admissible numbers exist} is thus demonstrated by induction.  
\end{proof}

Before getting to the end of our existential proof, we need one more definition:
\begin{definition}\label{definition: extension}
Let $(\Omega, p)$ and $(\Omega', p')$ be  classical probability spaces with $\sigma$-algebras of random events $\Omega$ and $\Omega'$ and with probability measures $p$ and $p'$, respectively. Then $(\Omega', p')$ is called an \emph{extension} of $(\Omega, p)$ if and only if there exists an injective lattice homomorphism $h: \Omega \rightarrow \Omega'$, preserving complementation, such that 
\begin{equation}
\altprob{h(X)} = \prob{X}	\phantom{12 pt} \mbox{for all } X\in \Omega.
\end{equation}
\end{definition}

The results of our demonstration can thus be crystallized into the following proposition:
\begin{proposition}\label{proposition: existence of grccs}
Let $(\Omega, p)$ be a classical probability space with $\sigma$-algebra of random events $\Omega$ and probability measure $p$. For any $A,B\in \Omega$ satisfying (\ref{equation: positive deviation}) and any $n \geq 2$, an extension $(\Omega', p')$ of $(\Omega, p)$ including a GHR-RCCS of size $n$ for $\dev{A}{B}$ exists if and only if $A$ and $B$ satisfy (\ref{equation: admissible grccs admissible}).
\end{proposition}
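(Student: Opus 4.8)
The plan is to treat the biconditional as a bridge between partitions and quadruples of numbers: I would use Lemma \ref{lemma: grccs admissible} to pass between a GHR-RCCS and the set $\set{i=1}{n}{a_i,b_i,c_i,d_i}$ of HR-admissible numbers it induces, and then use Lemmas \ref{lemma: grccs admissible constraint} and \ref{lemma: grccs-admissible numbers exist} to tie the existence of such numbers to condition (\ref{equation: admissible grccs admissible}). The two implications are proved separately.

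For the \emph{only if} direction, suppose some extension $(\Omega',p')$ contains a GHR-RCCS $\set{i=1}{n}{C_i}$ of size $n$ for $\dev{A}{B}$. Because the embedding $h$ of Definition \ref{definition: extension} is injective, preserves complementation, and satisfies $\altprob{h(X)}=\prob{X}$, we have $\altprob{h(A)}=\prob{A}$, $\altprob{h(B)}=\prob{B}$ and, $h$ being a lattice homomorphism, $\altprob{h(A)\wedge h(B)}=\altprob{h(A\wedge B)}=\prob{A\wedge B}$; hence the absolute correlation of the pair is unchanged by the embedding, and so is its deviation, the expected correlation being a datum attached to the pair rather than to the space. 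Applying Lemma \ref{lemma: grccs admissible} inside $(\Omega',p')$ then yields a set of HR-admissible numbers for $\dev{A}{B}$, and the contrapositive of Lemma \ref{lemma: grccs admissible constraint} forbids (\ref{equation: grccs admissible constraint}), that is, it forces (\ref{equation: admissible grccs admissible}).

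For the \emph{if} direction, assume (\ref{equation: admissible grccs admissible}). Lemma \ref{lemma: grccs-admissible numbers exist} then furnishes, for every $n\geq 2$, a set $\set{i=1}{n}{a_i,b_i,c_i,d_i}$ of HR-admissible numbers for $\dev{A}{B}$, and it remains to construct an extension $(\Omega',p')$ carrying a partition $\set{i=1}{n}{C_i}$ that realises these numbers as in (\ref{equation: adm 1})-(\ref{equation: adm 4}); once this is secured, Lemma \ref{lemma: grccs admissible} immediately certifies that $\set{i=1}{n}{C_i}$ is a GHR-RCCS of size $n$ for $\dev{A}{B}$. Following Hofer-Szab\'{o} and R\'{e}dei, I would refine $\Omega$ by a fresh $n$-element partition correlated with $A$ and $B$ as prescribed but, within each Boolean cell of $A$ and $B$, distributed proportionally to $p$, so that the joint weight of $C_i$ on the cells $A\wedge B$, $A\wedge\overline{B}$, $\overline{A}\wedge B$, $\overline{A}\wedge\overline{B}$ equals $c_i d_i$, $c_i(a_i-d_i)$, $c_i(b_i-d_i)$, $c_i(1-a_i-b_i+d_i)$ respectively, with $h$ the induced embedding. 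Summation then returns $\altprob{C_i}=c_i$ and the conditional values $a_i,b_i,d_i$, while summing over $i$ and using (\ref{equation: adm a})-(\ref{equation: adm partition}) recovers $\altprob{h(X)}=\prob{X}$ on the image of $h$.

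The main obstacle is this last construction, and its delicate point is nonnegativity of the measure: the four weights are nonnegative precisely when each $d_i$ obeys the Fr\'{e}chet bounds $\max(0,a_i+b_i-1)\leq d_i\leq\min(a_i,b_i)$, which are not part of Definition \ref{definition: grccs admissible}. I would therefore verify that the numbers supplied by Lemma \ref{lemma: grccs-admissible numbers exist} can be chosen to meet them, exploiting the freedom left in that lemma's inductive construction together with the global constraints: (\ref{equation: positive deviation}) forces $\ecorr{A}{B}<\min\{\prob{A}(1-\prob{B}),\,\prob{B}(1-\prob{A})\}$, which governs the upper bound, while (\ref{equation: admissible grccs admissible}) forces $\ecorr{A}{B}>-\prob{A}\prob{B}$, which governs the lower one. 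A secondary point is to confirm that $h$ is a genuine injective, complementation-preserving lattice homomorphism carrying the original measure, which is exactly why the refinement must be taken over \emph{all} atoms of $\Omega$ and not merely over the subalgebra generated by $A$ and $B$; this is also the step at which the corrected counting signalled by the extra constraint (\ref{equation: adm partition}) must be respected.
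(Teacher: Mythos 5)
Your architecture coincides with the paper's own proof: the only-if direction is exactly Lemma \ref{lemma: grccs admissible} applied inside the extension followed by the contrapositive of Lemma \ref{lemma: grccs admissible constraint} (your remark that $h$ preserves $\prob{A}$, $\prob{B}$ and $\prob{A\wedge B}$, hence $\corr{A}{B}$ and $\dev{A}{B}$, is the tacit step the paper leaves to the reader), and your extension construction for the if direction is precisely the paper's adaptation of Hofer-Szab\'{o} and R\'{e}dei's Step 2: your four cell weights $c_id_i$, $c_i(a_i-d_i)$, $c_i(b_i-d_i)$, $c_i(1-a_i-b_i+d_i)$ are the paper's ratios (\ref{equation: r1i})--(\ref{equation: r4i}) multiplied by the probabilities of the respective cells. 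So far, same proof.

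The divergence is in your final paragraph, and there is a genuine problem there. Your Fr\'{e}chet worry is well founded --- quasi-admissibility only imposes $0<d_i<1$ and $d_i=a_ib_i+\varepsilon$, not $\max(0,a_i+b_i-1)\leq d_i\leq\min(a_i,b_i)$ --- but your proposed repair misidentifies which constraint controls the lower bound. Condition (\ref{equation: admissible grccs admissible}) delivers only $d_i>0$ (already part of (\ref{equation: adm aibi})); the other half of the lower bound, $d_i\geq a_i+b_i-1$, amounts to $\varepsilon\geq -(1-a_i)(1-b_i)$, and this cannot always be arranged. Indeed (\ref{equation: grccs admissible}) makes the $a_i$ and $b_i$ co-monotone, while (\ref{equation: adm a})--(\ref{equation: adm partition}) make $\prob{A}$ and $\prob{B}$ weighted averages of them, so some single index $n$ must have $a_n>\prob{A}$ and $b_n>\prob{B}$ simultaneously; realizing $C_n$ as an event then forces $\varepsilon+(1-a_n)(1-b_n)\geq 0$, and since $(1-a_n)(1-b_n)<(1-\prob{A})(1-\prob{B})$, existence requires $\varepsilon+(1-\prob{A})(1-\prob{B})>0$ --- a condition independent of (\ref{equation: admissible grccs admissible}). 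Concretely, take $\prob{A}=\prob{B}=0.9$ with $A,B$ independent and $\ecorr{A}{B}=-0.2$: then (\ref{equation: positive deviation}) and (\ref{equation: admissible grccs admissible}) both hold ($\varepsilon+\prob{A}\prob{B}=0.61$), yet every candidate system would assign $\cprob{\overline{A}\wedge\overline{B}}{C_n}=(1-a_n)(1-b_n)+\varepsilon<0$, so no choice of the free parameters in Lemma \ref{lemma: grccs-admissible numbers exist} can meet the bounds and your verification plan cannot be completed. (For $\varepsilon\geq 0$ your plan does go through: the lower bound is automatic from $d_i=a_ib_i+\varepsilon\geq a_ib_i$, and the upper bound is governed, as you say, by $\varepsilon<\min\{\prob{A}(1-\prob{B}),\prob{B}(1-\prob{A})\}$, which follows from (\ref{equation: positive deviation}).) To be fair, this is not a defect relative to the published argument: the paper's proof never checks nonnegativity of (\ref{equation: r1i})--(\ref{equation: r4i}) at all, and in the regime just described $r_i^4<0$, so you have in effect located a gap in the paper's own proof; but the specific claim that (\ref{equation: admissible grccs admissible}) ``governs the lower bound'' is the step of your proposal that fails.
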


\begin{proof}
The only-if clause immediately follows from Lemma \ref{lemma: grccs admissible} and Lemma \ref{lemma: grccs admissible constraint}. Proof of the if-clause, instead, is structurally similar to Step 2 of Hofer-Szab\'{o} and R\'{e}dei's proof for the existence of HR-RCCSs of arbitrary finite size in \cite{hofer-szabo redei 2006}, although  Hofer-Szab\'{o} and R\'{e}dei's conditions (60)-(63) will have to be replaced by:
\begin{gather}
r_{i}^{1} = \frac{c_{i}d_{i}}{\prob{A\wedge B}}										\label{equation: r1i} 	\\
r_{i}^{2} = \frac{c_{i}a_{i}-c_{i}d_{i}}{\prob{A\wedge \overline{B}}} 							\label{equation: r2i}	\\
r_{i}^{3} = \frac{c_{i}b_{i}-c_{i}d_{i}}{\prob{\overline{A}\wedge B}} 							\label{equation: r3i}	\\
r_{i}^{4} = \frac{c_{i} -c_{i}a_{i}-c_{i}b_{i}+c_{i}d_{i}}{\prob{\overline{A}\wedge \overline{B}}},		\label{equation: r4i}	
\end{gather}
which, owing to (\ref{equation: adm di}), actually reduce to the aforesaid conditions for $\varepsilon = 0$. 
\end{proof}


\section{Generalised Reichenbachian Common Cause Systems Revisited}\label{generalised m-reichenbachian common cause systems}


There are two aspects in which HR-RCCSs may not be considered fully satisfactory generalisations of conjunctive common causes. The first aspect is that they can admit of elements that are probabilistically independent of one or both events from the corresponding correlated pair. This is at odds with the intuition that positive causes should \textit{ceteris paribus} increase the probability of their effects, and that negative causes should \textit{ceteris paribus} decrease their probability. The second aspect is that they rule out the possibility that two distinct causes could equally alter the probability of one, or both, of their effects. On the face of it, there is simply no reason why a systems of common causes should be so constrained. To overcome these limitations, in \cite{mazzola 2012} I proposed a revisitation of HR-RCCSs, along the following lines:
\begin{definition}\label{definition: mccs}
Let $(\Omega, p)$ be a classical probability space with $\sigma$-algebra of random events $\Omega$ and probability measure $p$. For any $A,B\in \Omega$, a \emph{Reichenbachian Common Cause System* (M-RCCS)} of size $n\geq 2$ for $\corr{A}{B}$ is a partition $\set{i = 1}{n}{C_{i}}$ of $\Omega$ such that:
\begin{eqnarray}
\prob{C_i} \neq 0								 				&	&		(i = 1, ..., n)					\label{equation: mccs 0} \\	
\ccorr{A}{B}{C_{i}} = 0							        			 	&	& 		(i = 1, ..., n)					\label{equation: mccs 1} \\
\big[\cprob{A}{C_i} - \prob{A}\big]\big[ \cprob{B}{C_i} - \prob{B}\big ]> 0     			&	&		(i = 1, ..., n).				\label{equation: mccs 2} 
\end{eqnarray}
\end{definition}

Whether M-RCCSs are really to be preferred to HR-RCCSs is controversial \cite{stergiou 2015}. However, this is no place to settle that issue. Rather, in this section we shall limit ourselves to offer an alternative extension of the generalised common cause model, by taking M-RCCSs as a basis. Let us accordingly define:
\begin{definition}\label{definition: gmccs}
Let $(\Omega, p)$ be a classical probability space with $\sigma$-algebra of random events $\Omega$ and probability measure $p$. For any $A,B\in \Omega$, a \emph{Generalised Reichenbachian Common Cause System* (GM-RCCS)} of size $n\geq 2$ for $\corr{A}{B}$ is a partition $\set{i = 1}{n}{C_{i}}$ of $\Omega$ such that:
\begin{eqnarray}
\prob{C_i} \neq 0								 				&	&		(i = 1, ..., n)					\label{equation: gmccs 0} \\	
\ccorr{A}{B}{C_{i}} = \ecorr{A}{B}							        		&	& 		(i = 1, ..., n)					\label{equation: gmccs 1} \\
\big[\cprob{A}{C_i} - \prob{A}\big]\big[ \cprob{B}{C_i} - \prob{B}\big ]> 0     			&	&		(i = 1, ..., n ).				\label{equation: gmccs 2} 
\end{eqnarray}
\end{definition}

Just as with GHR-RCCSs, it can be shown that GM-RCCSs invariably produce a positive deviation between the observed correlation of their effects and their expected correlation. To this end, let us first introduce the following lemma:
\begin{lemma}\label{lemma: gmccs deviation}
Let $(\Omega, p)$ be a classical probability space with $\sigma$-algebra of random events $\Omega$ and probability measure $p$. Let $A,B\in\Omega$ and let $\set{i\in I}{}{C_i}$  be a partition of $\Omega$ satisfying conditions (\ref{equation: gmccs 0})-(\ref{equation: gmccs 1}). Then:
\begin{equation}\label{equation: gmccs deviation}
\dev{A}{B} = \dsum{i\in I}{}{\prob{C_i}[\cprob{A}{C_i}-\prob{A}][\cprob{B}{C_i}-\prob{B}]}.
\end{equation}
\end{lemma}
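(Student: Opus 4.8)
The plan is to follow the same route used to establish Lemma \ref{lemma: grccs deviation}, exploiting the theorem of total probability together with the screening condition (\ref{equation: gmccs 1}), but stopping short of the symmetrisation over pairs that produced the double sum there. First I would unfold the deviation using Definition \ref{definition: deviation} and the definition of unconditional correlation, writing $\dev{A}{B} = \corr{A}{B} - \ecorr{A}{B} = \prob{A\wedge B} - \prob{A}\prob{B} - \ecorr{A}{B}$.

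Next I would rewrite the joint probability $\prob{A\wedge B}$. By the theorem of total probability across the partition $\set{i\in I}{}{C_i}$, whose cells all have nonzero measure by (\ref{equation: gmccs 0}), we have $\prob{A\wedge B} = \dsum{i\in I}{}{\prob{C_i}\cprob{A\wedge B}{C_i}}$, and by Definition \ref{definition: correlation} each conditional joint probability decomposes as $\cprob{A\wedge B}{C_i} = \ccorr{A}{B}{C_i} + \cprob{A}{C_i}\cprob{B}{C_i}$. Substituting condition (\ref{equation: gmccs 1}), which sets every $\ccorr{A}{B}{C_i}$ equal to $\ecorr{A}{B}$, and using that $\dsum{i\in I}{}{\prob{C_i}} = 1$ because $\set{i\in I}{}{C_i}$ is a partition, I would obtain $\prob{A\wedge B} = \ecorr{A}{B} + \dsum{i\in I}{}{\prob{C_i}\cprob{A}{C_i}\cprob{B}{C_i}}$. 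Combining the two displays cancels the $\ecorr{A}{B}$ terms and leaves $\dev{A}{B} = \dsum{i\in I}{}{\prob{C_i}\cprob{A}{C_i}\cprob{B}{C_i}} - \prob{A}\prob{B}$.

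The final step is purely algebraic. I would expand the claimed right-hand side of (\ref{equation: gmccs deviation}), distribute the product $[\cprob{A}{C_i}-\prob{A}][\cprob{B}{C_i}-\prob{B}]$, and collapse the resulting four sums using the total-probability identities $\dsum{i\in I}{}{\prob{C_i}\cprob{A}{C_i}} = \prob{A}$ and $\dsum{i\in I}{}{\prob{C_i}\cprob{B}{C_i}} = \prob{B}$ together with $\dsum{i\in I}{}{\prob{C_i}} = 1$. The two mixed terms each collapse to $-\prob{A}\prob{B}$ and the constant term to $+\prob{A}\prob{B}$, so that the cross terms net to $-\prob{A}\prob{B}$, leaving exactly $\dsum{i\in I}{}{\prob{C_i}\cprob{A}{C_i}\cprob{B}{C_i}} - \prob{A}\prob{B}$, which matches the expression derived above.

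There is no genuine obstacle here: the statement is an algebraic identity, and the only facts used are the theorem of total probability, the definition of conditional correlation, the partition property, and the screening assumption (\ref{equation: gmccs 1}); condition (\ref{equation: gmccs 2}) is not needed. The one point deserving care is the bookkeeping in the expansion, so that precisely the quantity $\prob{A}\prob{B}$ survives. It is worth noting that, unlike the GHR-RCCS case, no symmetrisation over index pairs $(i,j)$ is required: the M-RCCS form compares each $\cprob{A}{C_i}$ directly with the marginal $\prob{A}$ rather than with $\cprob{A}{C_j}$, which is exactly why the single sum of (\ref{equation: gmccs deviation}) replaces the double sum of (\ref{equation: grccs deviation}).
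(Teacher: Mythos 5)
Your proof is correct and follows essentially the same route as the paper's: both rest on the total-probability decomposition of $\corr{A}{B}$ over the partition, the screening condition (\ref{equation: gmccs 1}), and the fact that $\dsum{i\in I}{}{\prob{C_i}}=1$, with your explicit expansion of $\prob{A\wedge B}$ and the cross terms merely spelling out the algebra the paper compresses into its displayed identity and ``few elementary calculations.'' Your closing observations --- that (\ref{equation: gmccs 2}) plays no role and that the comparison with the marginals $\prob{A}$, $\prob{B}$ obviates the pairwise symmetrisation of Lemma \ref{lemma: grccs deviation} --- are both accurate.
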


\begin{proof}
Let $(\Omega, p)$ be a classical probability space with $\sigma$-algebra of random events $\Omega$ and probability measure $p$. Let $A,B\in\Omega$ and let $\set{i\in I}{}{C_i}$ be a partition of $\Omega$ for which (\ref{equation: gmccs 0}) holds. The theorem of total probability thereby implies that
\begin{gather}
\corr{A}{B} = \dsum{i = 1}{n}{\prob{C_{i}}[\cprob{A}{C_{i}}-\prob{A}][\cprob{B}{C_{i}}-\prob{A}]}  + \dsum{i = 1}{n}{\prob{C_{i}}\ccorr{A}{B}{C_{i}}}.
\end{gather}
Let us now suppose that (\ref{equation: gmccs 1}) be satisfied, too.  Then, owing to the fact that 
\begin{gather}
	\dsum{i = 1}{n}{\prob{C_{i}}} = 1,
\end{gather}
few elementary calculations would transform the above equality into:
 \begin{gather}
\corr{A}{B} - \ecorr{A}{B} = \dsum{i = 1}{n}{\prob{C_{i}}[\cprob{A}{C_{i}}-\prob{A}][\cprob{B}{C_{i}}-\prob{A}]},  
\end{gather}
which according to (\ref{equation: deviation}) is just a restatement of (\ref{equation: gmccs deviation}).
\end{proof}
Based on the above lemma, it would then be easy to demonstrate the following proposition:
\begin{proposition}\label{proposition: grccs positive deviation}
Let $(\Omega, p)$ be a classical probability space with $\sigma$-algebra of random events $\Omega$ and probability measure $p$. For any $A,B\in\Omega$ and any $\set{i\in I}{}{C_i}\subseteq \Omega$, if $\set{i\in I}{}{C_i}$ is a GM-RCCS of size $n\geq 2$ for $\dev{A}{B}$, then (\ref{equation: positive deviation}) obtains.
\end{proposition}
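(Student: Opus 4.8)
The plan is to deduce the result directly from Lemma \ref{lemma: gmccs deviation}. First I would observe that a GM-RCCS of size $n\geq 2$ for $\dev{A}{B}$ satisfies, in particular, conditions (\ref{equation: gmccs 0})-(\ref{equation: gmccs 1}), which are exactly the hypotheses required by Lemma \ref{lemma: gmccs deviation}. Invoking that lemma therefore yields the representation (\ref{equation: gmccs deviation}), expressing the deviation as the sum $\dsum{i\in I}{}{\prob{C_i}[\cprob{A}{C_i}-\prob{A}][\cprob{B}{C_i}-\prob{B}]}$. This reduces the whole proposition to a sign analysis of the individual summands.

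Next I would argue that every summand in this expression is strictly positive. Condition (\ref{equation: gmccs 0}) guarantees that $\prob{C_i}\neq 0$ for each $i$, and since $p$ is a probability measure this forces $\prob{C_i} > 0$. Condition (\ref{equation: gmccs 2}), on the other hand, ensures that the product $[\cprob{A}{C_i}-\prob{A}][\cprob{B}{C_i}-\prob{B}]$ is strictly positive for each $i$. Hence each term $\prob{C_i}[\cprob{A}{C_i}-\prob{A}][\cprob{B}{C_i}-\prob{B}]$ is a product of two strictly positive factors, and is therefore itself strictly positive.

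Finally, since a GM-RCCS has size $n\geq 2$, the index set $I$ is nonempty (indeed it contains at least two elements), so the sum contains at least one strictly positive term and no negative terms. It follows that the sum is strictly positive, whence $\dev{A}{B} > 0$, which is precisely (\ref{equation: positive deviation}). I do not anticipate any genuine obstacle here: the entire substance of the proposition is already packaged into Lemma \ref{lemma: gmccs deviation}, and all that remains is the elementary remark that a sum of strictly positive terms is positive. The only point deserving minimal care is that condition (\ref{equation: gmccs 0}) must be upgraded from $\prob{C_i}\neq 0$ to $\prob{C_i} > 0$, which is immediate for a probability measure and mirrors exactly the corresponding step for GHR-RCCSs built on Lemma \ref{lemma: grccs deviation}.
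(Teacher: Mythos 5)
Your proof is correct and follows exactly the route the paper intends: the paper omits the argument for this proposition, remarking only that it is easy to demonstrate on the basis of Lemma \ref{lemma: gmccs deviation}. Your filled-in details --- applying the lemma via conditions (\ref{equation: gmccs 0})--(\ref{equation: gmccs 1}), upgrading $\prob{C_i}\neq 0$ to $\prob{C_i}>0$ since $p$ is a probability measure, using (\ref{equation: gmccs 2}) for the strict positivity of each bracketed product, and noting $n\geq 2$ gives a nonempty sum of strictly positive terms --- are precisely the omitted steps.
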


GM-RCCSs accordingly perform a similar explanatory function as GHR-RCCSs. Quite interestingly, moreover, for every two events $A$ and $B$ in a classical probability space such that $\dev{A}{B}>0$ and every $n\geq 2$, a GM-RCCSs of size $n$ for $\dev{A}{B}$ exists in some extension of the given probability space  if and only if a GHR-RCCS does.  Demonstrating this will be our next objective. 


\subsection{Existence of GM-RCCSs}\label{existence of generalised m-reichenbachian common cause systems}


The existential proof we shall elaborate in this section will follow the broad lines of the one developed in \S \ref{existence of generalised hr-reichenbachian common cause systems}. Just as with GHR-RCCSs, we shall first determine the necessary and sufficient conditions the values of $\cprob{A}{C_{1}}$, ...., $\cprob{A}{C_{n}}$, $\cprob{B}{C_{1}}$, ...., $\cprob{B}{C_{n}}$, $\cprob{A\wedge B}{C_{1}}$, ...., $\cprob{A\wedge B}{C_{n}}$, and $\prob{C_{1}}$, ...., $\prob{C_{n}}$ ought to satisfy so that $\set{i = 1}{n}{C_{i}}$ be a GM-RCCS of size $n\geq 2$ for $\dev{A}{B}$. Subsequently, we shall determine the necessary and sufficient conditions for the existence of such numbers, and on that basis we shall finally establish the necessary and sufficient conditions for the existence of an extension of the given probability space, where a GM-RCCS of size $n$ for $\dev{A}{B}$ could be found. 

Quite evidently, GM-RCCSs differ from GHR-RCCSs only in that they substitute condition (\ref{equation: grccs 2}) with (\ref{equation: gmccs 2}). Consequently, in order to complete the first step of our proof, we only need to replace Definition \ref{definition: grccs admissible} with the following one:

\begin{definition}\label{definition: gmccs admissible}
Let $(\Omega, p)$ be a classical probability space with $\sigma$-algebra of random events $\Omega$ and probability measure $p$. For any $A,B\in \Omega$ and any $n \geq 2$, a set $\set{i = 1}{n}{a_{i}, b_{i}, c_{i}, d_{i}}$ of real numbers is called \emph{M-admissible} for $\dev{A}{B}$ if and only if it is quasi-admissible for $\dev{A}{B}$ and it further satisfies
\begin{gather}\label{equation: gmccs admissible}
[a_{i}-\prob{A}][b_{i}-\prob{B}]> 0    	\phantom{12 pt} (i = 1, ..., n ).	
\end{gather}
\end{definition}

Just as before, the adequacy of the above definition is easily established:
\begin{lemma}\label{lemma: gmccs admissible}
Let $(\Omega, p)$ be a classical probability space with $\sigma$-algebra of random events $\Omega$ and probability measure $p$. For any $A,B\in \Omega$ and any $\set{i = 1}{n}{C_{i}}\subseteq\Omega$ where $n\geq 2$, the set $\set{i = 1}{n}{C_{i}}$ is a GM-RCCS of size $n$ for $\dev{A}{B}$ if and only if there exists a set $\set{i = 1}{n}{a_{i}, b_{i}, c_{i}, d_{i}}$ of M-admissible numbers for $\dev{A}{B}$ for which identities (\ref{equation: adm 1})-(\ref{equation: adm 4}) are true. 
\end{lemma}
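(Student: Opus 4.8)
The plan is to prove the biconditional by a direct, term-by-term translation between the three defining conditions of a GM-RCCS and the conditions of M-admissibility, using the identifications (\ref{equation: adm 1})--(\ref{equation: adm 4}) as a dictionary. The key observation is that M-admissibility differs from the HR-admissibility of Definition \ref{definition: grccs admissible} only in that the pairing constraint (\ref{equation: grccs admissible}) is replaced by (\ref{equation: gmccs admissible}), while GM-RCCSs differ from GHR-RCCSs only in that (\ref{equation: grccs 2}) is replaced by (\ref{equation: gmccs 2}). Hence the argument runs exactly parallel to the (omitted) proof of Lemma \ref{lemma: grccs admissible}, and the entire quasi-admissible part of the correspondence is in fact shared between the two.

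For the forward direction I would assume $\set{i=1}{n}{C_i}$ is a GM-RCCS for $\dev{A}{B}$ and set $a_i := \cprob{A}{C_i}$, $b_i := \cprob{B}{C_i}$, $c_i := \prob{C_i}$ and $d_i := \cprob{A\wedge B}{C_i}$, so that (\ref{equation: adm 1})--(\ref{equation: adm 4}) hold by construction. Applying the theorem of total probability to the partition then delivers (\ref{equation: adm a}), (\ref{equation: adm b}) and (\ref{equation: adm partition}); condition (\ref{equation: gmccs 1}), together with the expansion $\ccorr{A}{B}{C_i} = \cprob{A\wedge B}{C_i} - \cprob{A}{C_i}\cprob{B}{C_i} = d_i - a_i b_i$, yields (\ref{equation: adm di}); condition (\ref{equation: gmccs 0}) together with $n\geq 2$ and the partition property yields (\ref{equation: adm ci}), since $c_i = 1-\sum_{j\neq i}c_j < 1$; and condition (\ref{equation: gmccs 2}) becomes literally (\ref{equation: gmccs admissible}) after substituting (\ref{equation: adm 2})--(\ref{equation: adm 3}). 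The converse simply reads the same chain of equivalences backwards: given M-admissible numbers realised through (\ref{equation: adm 1})--(\ref{equation: adm 4}), condition (\ref{equation: adm ci}) returns (\ref{equation: gmccs 0}), (\ref{equation: adm di}) with the same expansion of $\ccorr{A}{B}{C_i}$ returns (\ref{equation: gmccs 1}), and (\ref{equation: gmccs admissible}) returns (\ref{equation: gmccs 2}); the partition property is inherited from the standing hypothesis on $\set{i=1}{n}{C_i}$, which is also what licenses the use of total probability here exactly as in Lemma \ref{lemma: quasi-admissible variant}.

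The one substantive point --- and where I expect the real, if minor, difficulty to lie --- is the strict interior bounds (\ref{equation: adm aibi}), namely $0 < a_i, b_i, d_i < 1$. Because $a_i, b_i, d_i$ are conditional probabilities, the GM-RCCS conditions only guarantee the weak bounds $0 \leq a_i, b_i, d_i \leq 1$: a cell $C_i$ that rendered $A$, $B$, or $A\wedge B$ certain or impossible would drive one of these quantities to an endpoint while still respecting (\ref{equation: gmccs 0}), (\ref{equation: gmccs 1}) and (\ref{equation: gmccs 2}). The strict inequalities of quasi-admissibility therefore encode an implicit non-degeneracy assumption --- that each putative common cause $C_i$ leaves its effects genuinely uncertain --- which is natural in the present modelling context and which I would make explicit; granting it, every quantity is kept away from its extremes and the forward direction closes. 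In the converse the strict bounds are already supplied by the definition of M-admissibility, so no extra work is needed there. Apart from this check, the lemma is an immediate transcription and carries no further mathematical content.
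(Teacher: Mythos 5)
Your dictionary translation is exactly the proof the paper has in mind: as with the parallel Lemma \ref{lemma: grccs admissible}, the paper omits it as "easily established", and the intended argument is precisely yours --- the theorem of total probability yields (\ref{equation: adm a})--(\ref{equation: adm partition}), the expansion $\ccorr{A}{B}{C_i} = d_i - a_i b_i$ turns (\ref{equation: gmccs 1}) into (\ref{equation: adm di}), and (\ref{equation: gmccs 2}) transcribes literally into (\ref{equation: gmccs admissible}) via (\ref{equation: adm 2})--(\ref{equation: adm 3}). Your caveat about the strict bounds (\ref{equation: adm aibi}) is moreover a genuine defect of the lemma as literally stated, not a gap in your argument: with $\ecorr{A}{B}=0$, take $\prob{C_1}=\prob{C_2}=\tfrac{1}{2}$, $\cprob{A}{C_1}=1$, $\cprob{B}{C_1}=0.8$, $\cprob{A}{C_2}=\cprob{B}{C_2}=0.2$ with $A,B$ conditionally independent given $C_2$; then $\cprob{A\wedge B}{C_1}=\cprob{B}{C_1}$ makes (\ref{equation: gmccs 1}) automatic, $\prob{A}=0.6$ and $\prob{B}=0.5$ give (\ref{equation: gmccs 2}), so this is a GM-RCCS, yet $a_1=1$ blocks any M-admissible realisation of (\ref{equation: adm 1})--(\ref{equation: adm 4}); and when $\ecorr{A}{B}<0$ conditionally exclusive effects likewise permit $d_i=0$. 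So the biconditional does need your explicit non-degeneracy proviso (or a convention that no cell renders $A$, $B$ or $A\wedge B$ certain or impossible), and note that such degenerate cells force $\ecorr{A}{B}\leq 0$, which is why they never arise in the paper's subsequent existence constructions. One small correction to your converse: the lemma's hypothesis is only $\set{i=1}{n}{C_i}\subseteq\Omega$, so partition-hood is not in fact a "standing hypothesis" --- (\ref{equation: adm partition}) delivers $\sum_i \prob{C_i}=1$ but not disjointness or exhaustiveness --- and you should either read partition-hood into both sides (the charitable reading, since GM-RCCSs are defined only for partitions) or observe that the right-to-left direction fails for overlapping families; the paper's statement of Lemma \ref{lemma: grccs admissible} shares this looseness.
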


M-admissible numbers are quasi-admissible by definition. This fact allows us to build on our previous discussion, to easily prove the following result:
\begin{lemma}\label{lemma: gmccs admissible constraint}
Let $(\Omega, p)$ be a classical probability space with $\sigma$-algebra of random events $\Omega$ and probability measure $p$. For any $A,B\in \Omega$, no M-admissible numbers exist for $\dev{A}{B}$ if 
\begin{gather}\label{equation: gmccs admissible constraint}
\ecorr{A}{B} + \prob{A}\prob{B} \leq 0. \tag{\ref{equation: grccs admissible constraint}}
\end{gather}
\end{lemma}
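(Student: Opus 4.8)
The plan is to establish the lemma by contraposition, closely following the strategy of Lemma~\ref{lemma: grccs admissible constraint} but exploiting the fact that the M-admissibility constraint (\ref{equation: gmccs admissible}) compares each $a_{i}$ and $b_{i}$ directly against $\prob{A}$ and $\prob{B}$, rather than pairwise. This will make the argument considerably shorter than the one needed for HR-admissible numbers. Accordingly, I would assume that for some $n\geq 2$ a set $\set{i = 1}{n}{a_{i}, b_{i}, c_{i}, d_{i}}$ of M-admissible numbers exists for $\dev{A}{B}$, adopt the abbreviations (\ref{equation: a})-(\ref{equation: epsilon}), and aim to derive $\varepsilon + ab > 0$, in direct contradiction with (\ref{equation: grccs admissible constraint}).

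The first step is to locate an index $i$ at which \emph{both} conditional probabilities fall strictly below their unconditional counterparts, i.e.\ $a_{i} < a$ and $b_{i} < b$. Since M-admissible numbers are quasi-admissible, conditions (\ref{equation: adm a}) and (\ref{equation: adm partition}) give $\dsum{i=1}{n}{c_{i}(a - a_{i})} = 0$, with every $c_{i} > 0$ by (\ref{equation: adm ci}). Were all the $a_{i}$ equal to $a$, then $[a_{i} - a] = 0$ would violate (\ref{equation: gmccs admissible}); hence they are not, and the vanishing of the weighted sum of terms with positive weights forces $a - a_{i} > 0$ for at least one $i$. This is where the M-admissibility condition does the decisive work: for any such $i$, since $a_{i} - a < 0$, the inequality $[a_{i} - a][b_{i} - b] > 0$ from (\ref{equation: gmccs admissible}) immediately yields $b_{i} - b < 0$. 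In contrast to the HR case, no separate argument is needed to align the two coordinates at a single index, so the two preliminary steps of Lemma~\ref{lemma: grccs admissible constraint} collapse into one.

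It then remains to draw the contradiction. For the index $i$ just identified we have $a_{i} < a$ and $b_{i} < b$, while (\ref{equation: adm aibi}) guarantees $a_{i}, b_{i} > 0$; hence $0 < a_{i} < a$ and $0 < b_{i} < b$, and multiplying these inequalities of positive numbers gives $a_{i}b_{i} < ab$. Invoking (\ref{equation: adm di}) to write $d_{i} = a_{i}b_{i} + \varepsilon$, and recalling from (\ref{equation: adm aibi}) that $d_{i} > 0$, I would conclude $ab + \varepsilon > a_{i}b_{i} + \varepsilon = d_{i} > 0$, so that $\ecorr{A}{B} + \prob{A}\prob{B} > 0$. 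By contraposition, whenever (\ref{equation: grccs admissible constraint}) holds no M-admissible set for $\dev{A}{B}$ can exist.

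The main obstacle, such as it is, lies entirely in the first step: one must be careful to rule out the degenerate case in which all the $a_{i}$ coincide with $a$, and to extract from the vanishing weighted average an index at which $a_{i}$ strictly undershoots $a$. Once this is secured, the M-admissibility condition furnishes the matching inequality $b_{i} < b$ for free, and the remainder is a routine positivity calculation; the genuinely delicate two-stage reshuffling required in the HR-admissible case is simply absent here, which is precisely why the lemma can be proved by building on the earlier discussion.
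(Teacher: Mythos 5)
Your proposal is correct and follows essentially the same route as the paper's own (deliberately sketchy) proof: contraposition, extraction of a single index $i$ with $a_{i} < \prob{A}$ and $b_{i} < \prob{B}$ directly from condition (\ref{equation: gmccs admissible}) together with the quasi-admissibility identities, and then the positivity chain $\prob{A}\prob{B} + \ecorr{A}{B} > a_{i}b_{i} + \ecorr{A}{B} = d_{i} > 0$ inherited from Lemma \ref{lemma: grccs admissible constraint}. You have in fact filled in precisely the ``elementary calculations'' the paper leaves implicit, including the correct observation that the pointwise form of (\ref{equation: gmccs admissible}) makes the two-stage relabelling argument of the HR case unnecessary.
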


\begin{proof}
Lemma \ref{lemma: gmccs admissible constraint} is demonstrated in a similar way as Lemma \ref{lemma: grccs admissible constraint}. Let us accordingly $(\Omega, p)$ be a classical probability space with $\sigma$-algebra of random events $\Omega$ and probability measure $p$, let $A, B\in \Omega$ be arbitrarily chosen so as to satisfy (\ref{equation: positive deviation}), and let us further assume that for some $n\geq 2$, a set $\set{i = 1}{n}{a_{i}, b_{i}, c_{i}, d_{i}}$ of M-admissible numbers exists for $\dev{A}{B}$. Moreover, let  identities (\ref{equation: adm 1})-(\ref{equation: adm 4}) and(\ref{equation: a})-(\ref{equation: epsilon}) be in place.

Showing that some $a_{i}, b_{i}\in \set{i = 1}{n}{a_{i}, b_{i}, c_{i}, d_{i}}$ exist satisfying 
\begin{gather}
a-a_{i}> 0\\
b-b_{i} > 0
\end{gather}
in this case would only require some elementary calculations, as the above inequality directly follow from (\ref{equation: gmccs admissible}) along with (\ref{equation: adm a}) and (\ref{equation: adm b}). The remainder of the proof would then proceed in exactly the same way as the analogous proof for Lemma \ref{lemma: grccs admissible constraint}.
\end{proof}

Condition (\ref{equation: admissible grccs admissible}) is thus necessary for the existence of M-admissible numbers for $\dev{A}{B}$, for any two events $A$ and $B$ satisfying (\ref{equation: positive deviation}) and any $n\geq 2$. Moreover, as with GHR-RCCSs, it is also sufficient: 
\begin{lemma}\label{lemma: gmccs-admissible numbers exist}
Let $(\Omega, p)$ be a classical probability space with $\sigma$-algebra of random events $\Omega$ and probability measure $p$. For any $A,B\in \Omega$ satisfying (\ref{equation: positive deviation}), a set $\set{i = 1}{n}{a_{i}, b_{i}, c_{i}, d_{i}}$ of M-admissible numbers for $\dev{A}{B}$ exists for each $n\geq 2$ if
\begin{gather}\label{equation: admissible gmccs admissible}
\ecorr{A}{B} + \prob{A}\prob{B} > 0. \tag{\ref{equation: admissible grccs admissible}}
\end{gather}
\end{lemma}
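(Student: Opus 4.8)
The plan is to mirror the proof of Lemma~\ref{lemma: grccs-admissible numbers exist}, exploiting that M-admissibility and HR-admissibility differ only in their final clause: (\ref{equation: gmccs admissible}) constrains each pair $(a_i,b_i)$ individually to lie in one of the two open ``diagonal'' quadrants about $(\prob{A},\prob{B})$, whereas (\ref{equation: grccs admissible}) is a pairwise condition. Since both notions share quasi-admissibility, I would invoke Lemma~\ref{lemma: quasi-admissible variant} to reduce the task to exhibiting free parameters $a_1,\dots,a_{n-1}$, $b_1,\dots,b_{n-1}$, $c_1,\dots,c_{n-1}$ from which (\ref{equation: an})--(\ref{equation: di}) determine the remaining numbers, subject to the ranges (\ref{equation: adm aibi})--(\ref{equation: adm ci}) and to (\ref{equation: gmccs admissible}). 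With $a=\prob{A}$, $b=\prob{B}$ and $\varepsilon=\ecorr{A}{B}$ fixed, I would then induct on $n$.

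For the base case $n=2$ I would reuse the computation (\ref{equation: a2})--(\ref{equation: d1}) but sharpen the choice of parameters: take $c_1$ small and, crucially, require $0<a_1<a$ and $0<b_1<b$ strictly. Since $a_1<a$ forces $a_2=(a-c_1a_1)/(1-c_1)>a$ and likewise $b_1<b$ forces $b_2>b$, the point $(a_1,b_1)$ lands in the lower-left quadrant and $(a_2,b_2)$ in the upper-right quadrant about $(a,b)$, so (\ref{equation: gmccs admissible}) holds for both $i=1,2$ without any pairwise bookkeeping. The ranges $0<a_i,b_i<1$ and $0<c_i<1$ follow from $1>a,b>0$ (itself a consequence of (\ref{equation: positive deviation})) together with the smallness of $c_1$, exactly as in the HR argument.

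The delicate point of the base case, and the main obstacle overall, is securing $0<d_i<1$ while respecting the strict quadrant inequalities, where $d_i=a_ib_i+\varepsilon$ by (\ref{equation: adm di}). When $\varepsilon\geq 0$ positivity of $d_i$ is automatic; when $\varepsilon<0$ one needs $a_1b_1>-\varepsilon$ for $d_1>0$, which is compatible with $a_1<a$, $b_1<b$ precisely because the hypothesis (\ref{equation: admissible gmccs admissible}) gives $ab>-\varepsilon$, leaving the nonempty range $a_1b_1\in(-\varepsilon,ab)$ from which to choose; for the second component $a_2b_2>ab$ then yields $d_2>ab+\varepsilon>0$. For the upper bound, (\ref{equation: positive deviation}) gives $\varepsilon<\corr{A}{B}=\prob{A\wedge B}-ab$, whence $ab+\varepsilon<\prob{A\wedge B}<1$, so both $d_i$ remain below $1$ for $c_1$ small. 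This is exactly the step that consumes both standing hypotheses.

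For the inductive step I would exploit the pointwise nature of (\ref{equation: gmccs admissible}). Given an M-admissible set of size $m$, I would split one component, say $(a_m,b_m,c_m,d_m)$, into two components lying on a short segment through $(a_m,b_m)$ having $(a_m,b_m)$ as their $c$-weighted mean, with the two new weights summing to $c_m$. This leaves the marginals (\ref{equation: adm a})--(\ref{equation: adm partition}) intact; assigning each new $d$-value as the product of the corresponding $a$- and $b$-values plus $\varepsilon$ maintains (\ref{equation: adm di}); and for a sufficiently short segment both new points stay in the same open quadrant as $(a_m,b_m)$ and keep their $a,b,d$ within $(0,1)$, so (\ref{equation: gmccs admissible}) and the ranges (\ref{equation: adm aibi})--(\ref{equation: adm ci}) persist. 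Because (\ref{equation: gmccs admissible}) never compares distinct indices, this never requires the $(a_i,b_i)$ to be distinct, so it avoids the relabelling that the pairwise condition (\ref{equation: grccs admissible}) forced in the HR case. By Definition~\ref{definition:quasi-admissible} and Definition~\ref{definition: gmccs admissible} the resulting $4(m+1)$ numbers are M-admissible, which closes the induction.
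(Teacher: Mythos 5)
Your proof is correct and takes essentially the same route as the paper's: reduce to the free-parameter description of quasi-admissible sets, induct on $n$ with base case $n=2$ (where the HR- and M-conditions coincide, which is exactly how the paper disposes of that case by citing Lemma \ref{lemma: grccs-admissible numbers exist}), and close the induction by splitting one component's weight $c_k$ while preserving the marginal constraints, exploiting that (\ref{equation: gmccs admissible}) is pointwise and so tolerates non-distinct components. The only differences are harmless refinements on your part: the paper's inductive step duplicates $(a_k,b_k,d_k)$ into $r$ identical copies of weight $c_k/r$ --- the degenerate, zero-length instance of your short-segment split --- and your self-contained base case is slightly more careful than the paper's, in that it makes explicit the constraint $a_1b_1>-\varepsilon$ needed for $d_1>0$ when $\varepsilon<0$, which the paper leaves implicit in its choice (\ref{equation: a1b1 ii}).
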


\begin{proof}
Let $(\Omega, p)$ be a classical probability space with $\sigma$-algebra of random events $\Omega$ and probability measure $p$. Moreover, let $A,B\in \Omega$ satisfy (\ref{equation: positive deviation}) and (\ref{equation: admissible gmccs admissible}). Proof will proceed by induction on $n$.

Let $n = 2$ accordingly be our inductive basis. Because for $n = 2$ conditions (\ref{equation: grccs admissible}) and (\ref{equation: gmccs admissible}) become equivalent, this case was actually covered in the inductive proof for Lemma \ref{lemma: grccs-admissible numbers exist}. Next, as our inductive hypothesis, let $n = m$ and let $\set{i = 1}{m}{a_{i}, b_{i}, c_{i}, d_{i}}$ be M-admissible for $\dev{A}{B}$. On this basis, let us now proceed to the last step of our inductive proof, and let $n = m+(r-1)$, where $r\geq 2$.

Let us first choose some $c_{k}\in \set{i = 1}{m}{a_{i}, b_{i}, c_{i}, d_{i}}$. Then, (\ref{equation: adm partition}) and (\ref{equation: adm ci}) ensure that it is possible to find a set $\set{j = 1}{r}{c_{k}^{j}}$ of $r\geq 2$ identical real numbers, lying inside the interval $(0,1)$, such that 
\begin{equation}
\dsum{j = 1}{r}{c_{k}^{j}} =\dsum{j = 1}{r}{\frac{c_{k}}{r}} =  c_{k}. 
\end{equation}
Furthermore, it is trivially possible to find three sets $\set{j = 1}{r}{a_{k}^{j}}$, $\set{j = 1}{r}{b_{k}^{j}}$ and $\set{j = 1}{r}{d_{k}^{j}}$ of $r\geq 2$ identical real numbers satisfying
\begin{gather}
a_{k}^{j} = a_{k} \phantom{12 pt} j = 1, ..., r \\
b_{k}^{j} = b_{k} \phantom{12 pt} j = 1, ..., r \\
d_{k}^{j} = d_{k} \phantom{12 pt} j = 1, ..., r. 
\end{gather} 
Given (\ref{equation: a})-(\ref{equation: b}), our inductive hypothesis then implies:
\setlength{\jot}{8  pt}
\begin{gather}
a = 
			\dsum{i=1}{m}{a_{i}c_{i}} = 
			\dsum{k\neq i = 1}{m}{a_{i}c_{i}} + a_{k}c_{k} = 
			\dsum{k\neq i = 1}{m}{a_{i}c_{i}} + a_{k}\frac{c_{k}}{r}r	= 
			\dsum{k\neq i = 1}{m}{a_{i}c_{i}} + \dsum{j = 1}{r}{a_{k}^{j}c_{k}^{j}} =
			\dsum{k\neq i = 1}{m + r}{a_{i}c_{i}} \\
b = 
			\dsum{i=1}{m}{b_{i}c_{i}} = 
			\dsum{k\neq i = 1}{m}{b_{i}c_{i}} + b_{k}c_{k} = 
			\dsum{k\neq i = 1}{m}{b_{i}c_{i}} + b_{k}\frac{c_{k}}{r}r	= 
			\dsum{k\neq i = 1}{m}{b_{i}c_{i}} + \dsum{j = 1}{r}{b_{k}^{j}c_{k}^{j}} =
			\dsum{k\neq i = 1}{m + r}{b_{i}c_{i}}  \\
1 = 			
			\dsum{i=1}{m}{c_{i}} = 
			\dsum{k\neq i = 1}{m}{c_{i}} + c_{k} = 
			\dsum{k\neq i = 1}{m}{c_{i}} + \dsum{j = 1}{r}{c_{k}^{j}} =
			\dsum{k\neq i = 1}{m + r}{c_{i}}  \\
\varepsilon =	 
			d_{k} - a_{k}b_{k} = d_{k}^{j} - a_{k}^{j}b_{k}^{j} \phantom{12 pt} j = 1, ..., r.
\end{gather}
This guarantees that the set $\set{k\neq i = 1}{m}{a_{i}, b_{i}, c_{i}, d_{i}}\cup\set{j = 1}{r}{a_{j}, b_{j}, c_{j}, d_{j}}$ of $4(m+r-1)$ numbers so obtained satisfies (\ref{equation: adm a})-(\ref{equation: adm di}). Furthermore, (\ref{equation: adm aibi})-(\ref{equation: adm ci}) and (\ref{equation: gmccs admissible}) are clearly satisfied owing to our inductive hypothesis and to the way numbers $\set{j = 1}{r}{a_{j}, b_{j}, c_{j}, d_{j}}$ were chosen. This is enough to prove that  $\set{k\neq i = 1}{m}{a_{i}, b_{i}, c_{i}, d_{i}}\cup\set{j = 1}{r}{a_{j}, b_{j}, c_{j}, d_{j}}$ is M-admissible for $\dev{A}{B}$, therefore concluding our inductive proof.
\end{proof}

Our existential proof is now virtually complete. Let us just add one final touch:
\begin{proposition}\label{proposition: existence of gmccs}
Let $(\Omega, p)$ be a classical probability space with $\sigma$-algebra of random events $\Omega$ and probability measure $p$. For any $A,B\in \Omega$ satisfying (\ref{equation: positive deviation}) and any $n \geq 2$, an extension $(\Omega', p')$ of $(\Omega, p)$ including a GM-RCCS of size $n$ for $\dev{A}{B}$ exists if and only if $A$ and $B$ satisfy (\ref{equation: admissible grccs admissible}).
\end{proposition}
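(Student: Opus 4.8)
The plan is to mirror the proof of Proposition \ref{proposition: existence of grccs} almost verbatim, substituting the M-analogues of the relevant lemmas. For the only-if clause, I would suppose that some extension $(\Omega', p')$ of $(\Omega, p)$ contains a GM-RCCS of size $n$ for $\dev{A}{B}$. By Lemma \ref{lemma: gmccs admissible}, this guarantees the existence of a set $\set{i=1}{n}{a_{i}, b_{i}, c_{i}, d_{i}}$ of M-admissible numbers for $\dev{A}{B}$. But Lemma \ref{lemma: gmccs admissible constraint} forbids the existence of any such numbers whenever $\ecorr{A}{B}+\prob{A}\prob{B}\leq 0$; by contraposition, $A$ and $B$ must then satisfy (\ref{equation: admissible grccs admissible}).

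For the if-clause, I would assume (\ref{equation: admissible grccs admissible}) and invoke Lemma \ref{lemma: gmccs-admissible numbers exist} to secure, for every $n\geq 2$, a set $\set{i=1}{n}{a_{i}, b_{i}, c_{i}, d_{i}}$ of M-admissible numbers for $\dev{A}{B}$. It then remains only to build an extension in which some partition $\set{i=1}{n}{C_{i}}$ realises these numbers through identities (\ref{equation: adm 1})-(\ref{equation: adm 4}), since Lemma \ref{lemma: gmccs admissible} would thereupon certify that the partition is a GM-RCCS. The crucial observation is that this extension-building step is wholly insensitive to whether the numbers satisfy (\ref{equation: grccs admissible}) or (\ref{equation: gmccs admissible}): the construction used for Proposition \ref{proposition: existence of grccs}---namely Step 2 of Hofer-Szab\'{o} and R\'{e}dei's argument, with the weights $r_{i}^{1}, r_{i}^{2}, r_{i}^{3}, r_{i}^{4}$ of (\ref{equation: r1i})-(\ref{equation: r4i})---draws only on the quasi-admissible portion of the data. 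Since M-admissible numbers are quasi-admissible by definition, exactly the same construction applies, delivering the desired extension.

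The single point that genuinely deserves checking---hence the main, if modest, obstacle---is that the weights $r_{i}^{1}, \ldots, r_{i}^{4}$ remain well-defined and induce a legitimate probability measure on the extended space when fed M-admissible rather than HR-admissible inputs. This amounts to verifying that each $r_{i}^{k}$ is nonnegative and that they aggregate correctly, but those facts depend solely on (\ref{equation: adm a})-(\ref{equation: adm ci}) together with (\ref{equation: adm di}), that is, on quasi-admissibility alone; they therefore transfer unchanged from the GHR-RCCS case. Combining the two directions then yields the stated biconditional, and indeed shows---as anticipated---that a GM-RCCS of size $n$ for $\dev{A}{B}$ exists in some extension precisely when a GHR-RCCS does.
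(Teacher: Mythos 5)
Your proposal is correct and is essentially the paper's own proof, which consists entirely of the remark that the argument for Proposition \ref{proposition: existence of grccs} carries over mutatis mutandis: you make the substitutions explicit (Lemma \ref{lemma: gmccs admissible} for Lemma \ref{lemma: grccs admissible}, Lemma \ref{lemma: gmccs admissible constraint} for Lemma \ref{lemma: grccs admissible constraint}, Lemma \ref{lemma: gmccs-admissible numbers exist} for Lemma \ref{lemma: grccs-admissible numbers exist}) and correctly identify why the Hofer-Szab\'{o}--R\'{e}dei extension construction with the weights (\ref{equation: r1i})-(\ref{equation: r4i}) transfers unchanged, namely that it draws only on the quasi-admissible data and never on the ordering conditions (\ref{equation: grccs admissible}) or (\ref{equation: gmccs admissible}). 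One minor overstatement, inherited from the paper itself rather than introduced by you: nonnegativity of $r_{i}^{2}$, $r_{i}^{3}$, $r_{i}^{4}$ does not follow from quasi-admissibility alone (for $\varepsilon > 0$, for instance, $r_{i}^{2}\geq 0$ additionally requires $\varepsilon \leq a_{i}(1-b_{i})$, which conditions (\ref{equation: adm a})-(\ref{equation: adm ci}) and (\ref{equation: adm di}) do not secure), but since the paper's proof of Proposition \ref{proposition: existence of grccs} glosses over exactly the same point, your argument matches the paper's at its own level of rigor.
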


\begin{proof}
Proof is in all similar to the proof for Proposition \ref{proposition: existence of grccs}, mutatis mutandis.
\end{proof}

Two final remarks may be added at this at this point. First,  Proposition \ref{proposition: existence of grccs} and Proposition \ref{proposition: existence of gmccs} both rectify the results announced in \cite{mazzola 2013}, where it was implicitly assumed that expected correlations be greater than or equal to zero, leading to the erroneous claim that a generalised common cause should exist, in some extension of the initial probability space, for every positive deviation. Second, GHR-RCCSs and GM-RCCSs for a given deviation may not coexist in the same probability space. Neverthless, Proposition \ref{proposition: existence of grccs} and \ref{proposition: existence of gmccs} jointly guarantee the following result:
\begin{proposition}\label{proposition: existence of grccs and gmccs}
Let $(\Omega, p)$ be a classical probability space with $\sigma$-algebra of random events $\Omega$ and probability measure $p$. For any $A,B\in \Omega$ satisfying (\ref{equation: positive deviation}), an extension $(\Omega', p')$ of $(\Omega, p)$ including a GHR-RCCS of size $n\geq 2$ for $\dev{A}{B}$ exists if and only if there is some extension $(\Omega'', p'')$ of $(\Omega, p)$  including a GM-RCCS of size $m\geq 2$ for $\dev{A}{B}$. 
\end{proposition}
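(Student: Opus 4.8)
The plan is to reduce the biconditional to the observation that Proposition~\ref{proposition: existence of grccs} and Proposition~\ref{proposition: existence of gmccs} identify \emph{one and the same} necessary and sufficient condition — namely the inequality (\ref{equation: admissible grccs admissible}), $\ecorr{A}{B} + \prob{A}\prob{B} > 0$ — for the existence of an extension hosting a GHR-RCCS, respectively a GM-RCCS, of any prescribed size. Since both existence claims are thus equivalent to a single size-independent pivot, they are equivalent to each other. I would therefore not attempt to construct a direct translation between the two kinds of system, but simply chain the two propositions through this shared condition.

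For the forward implication, I would assume that some extension $(\Omega', p')$ of $(\Omega, p)$ contains a GHR-RCCS of some size $n \geq 2$ for $\dev{A}{B}$. The only-if clause of Proposition~\ref{proposition: existence of grccs} then forces $A$ and $B$ to satisfy (\ref{equation: admissible grccs admissible}). Feeding this into the if-clause of Proposition~\ref{proposition: existence of gmccs} — choosing, say, $m = 2$ — immediately yields an extension $(\Omega'', p'')$ of $(\Omega, p)$ containing a GM-RCCS of size $m$ for $\dev{A}{B}$, as required. The converse is entirely symmetric: starting from a GM-RCCS of some size $m \geq 2$ in some extension, the only-if clause of Proposition~\ref{proposition: existence of gmccs} delivers (\ref{equation: admissible grccs admissible}), whence the if-clause of Proposition~\ref{proposition: existence of grccs} produces an extension with a GHR-RCCS of size $n = 2$. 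Both directions presuppose only (\ref{equation: positive deviation}), which is part of the hypothesis.

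There is no genuine obstacle here, so the only point worth flagging is the one already anticipated in the remark preceding the statement: the two systems need not share a common extension, and the sizes $n$ and $m$ are quantified independently. This is precisely why a direct construction would be misguided. Because each existence criterion collapses to the same inequality (\ref{equation: admissible grccs admissible}), which mentions neither the size nor the particular extension, the difference in sizes and the impossibility of coexistence are immaterial: (\ref{equation: admissible grccs admissible}) functions as a common bridge, and the proposition follows as an immediate corollary of the two preceding existence results.
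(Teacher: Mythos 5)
Your proposal is correct and matches the paper's own reasoning exactly: the paper derives this proposition as an immediate joint consequence of Proposition~\ref{proposition: existence of grccs} and Proposition~\ref{proposition: existence of gmccs}, chaining the two existence results through the shared size-independent condition (\ref{equation: admissible grccs admissible}), just as you do. Your closing remark correctly identifies why no direct construction is needed and why the independent quantification of $n$ and $m$ is harmless.
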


This means that, irrespective of the different probabilistric properties of GHR-RCCSs and GM-RCCSs, neither model can explain more or different deviations than the other. The two models are thus to be assessed based not on what they can explain, but how. The question as to whether M-RCCSs should be preferred to HR-RCCSs, therefore, is carried over to their generalised counterparts.


\subsection{Conclusion}


The principle of the common cause decrees that improbable coincidences be put down to the action of some common cause. The standard interpretation of the principle takes this as a requirement that positive correlations between causally unrelated events be removed by conditioning on some conjunctive common cause. The interpretation here promoted, and encapsulated in the extended principle of the common cause, urges by contrast that common causes be called for in order to explain positive deviations between the estimated correlation of two events and their expected correlation. This paper has outlined two distinct probabilistic models for systems of common causes that incorporate the extended intepretation of the principle. GHR-RCCSs have been elaborated by combining the generalised common cause model with HR-RCCSs. GM-RCCSs, instead, have been obtained by integrating generalised common causes with M-RCCSs. The necessary and sufficient conditions for the existence of finite systems of either kind have been determined. 

Our demonstration led to the unexpected result that some extension of the given classical probability space can be found including a GHR-RCCS of arbitrary finite size for some specified positive deviation, if and only if a similar exstension can be found including a GM-RCCS of finite size for the same deviation. Even more interestingly, in either case the existence of such space is guaranteed if and only if the sum of the expected correlation of the pair of events under consideration and the product of their probabilities is greater than zero. The mathematical reason for this limitation is clear: only under said constraint, in fact, can HR-admissible numbers and M-admissible numbers for a positive deviation exist. The philosophical interpretation of this result, instead, is an open question. 



\end{document}